\documentclass[a4paper]{article}

\bibliographystyle{plain}
\usepackage{hyperref}
\usepackage{algorithmicx}
\usepackage[utf8]{inputenc}
\usepackage{algorithm}
\usepackage{algpseudocode}
\usepackage[margin=.85in]{geometry}
\usepackage{graphicx}
\usepackage{float}
\usepackage{subfig}
\usepackage{amsthm,color}
\usepackage{amsmath,amssymb}
\usepackage[dvipsnames]{xcolor}
\usepackage{paralist}
\usepackage[title]{appendix}
\usepackage{systeme}
\usepackage{soul}
\usepackage{multirow}
\usepackage{multicol}

\newcounter{casectr}
\newenvironment{case}{
	\begin{list}{{\sc Case \arabic{casectr}.}\enspace}
		{\usecounter{casectr}
			\setlength{\leftmargin}{0in}
			\addtolength{\itemsep}{\parsep}
			\setlength{\parsep}{0in}
			\setlength{\labelwidth}{\leftmargin}
			\addtolength{\labelwidth}{-\labelsep}
			\setlength{\rightmargin}{0in}
			\setlength{\listparindent}{\parindent}
		}
	}{\end{list}}

\newcommand{\rmv}[1]{}

\newcommand{\DNB}[1]{{#1}\text{-DNB}}
\newcommand{\NB}[1]{{#1}\text{-NB}}

\definecolor{purple}{rgb}{0.6,0,0.6}

\definecolor{darkgreen}{rgb}{0,0.4,0}







\newcommand{\bemph}[1]{{\bfseries\itshape#1}}

\newcommand{\Z}{\mathbb{Z}}

\newcommand{\val}{\textit{value}}
\newcommand{\applyT}{\textit{apply}_T}
\newcommand{\ptr}{\textit{ptr}}

\newcommand{\NQ}{\textsc{enqueue}}
\newcommand{\DQ}{\textsc{dequeue}}
\newcommand{\CAS}{\textsc{CAS}}
\newcommand{\afterTail}{\textit{l\_next}}
\newcommand{\emptyq}{\bot}
\newcommand{\gINode}{\textit{g\_Init\_Node}}

\newcommand{\gIDQ}{\textit{g\_Init\_DQ}}
\newcommand{\gHead}{\textit{g\_Head}}
\newcommand{\gTail}{\textit{g\_Tail}}
\newcommand{\gAnnE}{\textit{g\_Ann\_E}}
\newcommand{\gAnnD}{\textit{g\_Ann\_D}} 

\newcommand{\trynq}{\textsc{trytoenqueue}}
\newcommand{\trydq}{\textsc{trytodequeue}}
\newcommand{\add}{\textit{addr}}
\newcommand{\fNext}{\textit{next}}
\newcommand{\fFlag}{\textit{flag}}
\newcommand{\lHead}{\textit{l\_Head}}
\newcommand{\lTail}{\textit{l\_Tail}}
\newcommand{\lAnnE}{\textit{l\_Ann\_E}}
\newcommand{\lAnnD}{\textit{l\_Ann\_D}}
\newcommand{\lR}{\textit{l\_R}}
\newcommand{\gR}{\textit{g\_R}}

\newcommand{\lNode}{\textit{l\_Node}}

\newcommand{\lnext}{\textit{l\_next}}

\newcommand{\ladd}{\textit{l\_addr}}

\newcommand{\mathnull}{\textit{NULL}}

\newcommand{\vDone}{\textbf{done}}
\newcommand{\vFail}{\textbf{failed}}
\newcommand{\TRUE}{\textbf{done}}
\newcommand{\FALSE}{\textbf{failed}}

\newcommand{\res}{\textit{response}}
\newcommand{\gAnn}{\textit{g\_Ann}}

\newcommand{\tnb}{\textsc{2NB-U}}
\newcommand{\stt}{\textit{state}}
\newcommand{\op}{\textit{op}}
\newcommand{\try}{\textsc{TryToDo}}
\newcommand{\ex}{\textit{Exp}}


\newcommand{\rememberlines}{\xdef\rememberedlines{\number\value{ALG@line}}}
\newcommand{\resumenumbering}{\setcounter{ALG@line}{\rememberedlines}}

\algrenewcommand\algorithmicindent{3.7mm}



\newcounter{theorem}
\newcounter{NewCounter}
\newcounter{claimcount}[NewCounter]
\makeatletter
\renewcommand{\p@claimcount}{\theNewCounter.}
\makeatother

\newcounter{cclaimcount}[claimcount]
\makeatletter
\renewcommand{\p@cclaimcount}{\theNewCounter.\theclaimcount.}
\makeatother

\newenvironment{claim}{
	\par\refstepcounter{claimcount}\vspace{5pt}\noindent\textbf{Claim \arabic{NewCounter}.\arabic{claimcount}}\begin{itshape}
	}{\end{itshape}}

\newtheorem{theorem}[NewCounter]{Theorem}

\newtheorem{corollary}[NewCounter]{Corollary}
\newtheorem{lemma}[NewCounter]{Lemma}
\newtheorem{observation}[NewCounter]{Observation}

\newtheorem{definition}[NewCounter]{Definition}

\title{Differentiated nonblocking: a new progress condition\\and a matching queue algorithm}

\author{David Y. C. Chan\footnote{Department of Computer Science, University of Calgary.} \and Shucheng Chi\footnote{Institute for Interdisciplinary Information Sciences, Tsinghua University.}\\ \and Vassos Hadzilacos\footnote{Department of Computer Science, University of Toronto.} \and Sam Toueg\footnote{Department of Computer Science, University of Toronto.}} 


%

\begin{document}
	\maketitle
	
	\begin{abstract}
		In this paper, we first propose a new liveness requirement for shared objects and data structures,
		we then give a shared queue algorithm that satisfies this requirement and we prove its correctness.
		We also implement this algorithm and compare it to a well-known
		shared queue algorithm that is used in practice~\cite{MichaelScott96}.
		In addition to having a stronger worst-case progress guarantee,
		our experimental results suggest that, at the cost of a marginal decrease in throughput,
		our algorithm is significantly \emph{fairer},
		by a natural definition of fairness that we introduce here.
	\end{abstract}

	\section{Introduction}
	
	In this paper, we first propose a new liveness requirement for shared objects and data structures;
	we then give a shared queue algorithm that satisfies this requirement and we prove its correctness.
	We also implement this algorithm and compare it to a well-known
	shared queue algorithm that is used in practice~\cite{MichaelScott96}.
	In addition to having a stronger worst-case progress guarantee,
	our experimental results suggest that, at the cost of a marginal decrease in throughput,
	our algorithm is much more \emph{fair},
	by a natural definition of fairness that we introduce here. 
	We now explain the paper's motivation and contributions in more detail.
	
	\subsection{Background}

	\emph{Wait-freedom}, a well-known liveness requirement for shared objects~\cite{Herlihy91}, 
	guarantees that \emph{all} the (non-faulty) processes
	make progress, i.e., every process completes every operation that it applies on the object.
	This is a very strong progress property but, unfortunately, it is also typically expensive to achieve:
	wait-free implementations
	often use an expensive ``help'' mechanism whereby
	processes that are fast must perform the operations of slower processes
	to prevent them from starving.
	
	This is why many data structures and shared object implementations in practice are only \emph{non-blocking}
	(also known as \emph{lock-free}). 
	Such implementations are often more efficient than wait-free ones, but they
	guarantee only that \emph{one} process makes progress:
	if several processes apply operations to a nonblocking object, the object may reply
	only to the operations of a single process,
	while all the other processes can be stuck waiting for replies to their operations forever.
	
	More recently, some papers~\cite{BushkovGuerraoui15,Ben-DavidCHT16} considered a parametrized progress requirement,
	called \emph{$k$-nonblocking}, that ranges from non-blocking (when $k=1$) to wait-freedom (when $k=n$, the number of processes in the system):
	intuitively, when $n$ processes apply operations,
	a $k$-nonblocking object is guaranteed to reply to the operations of at least $k$ processes.
	A universal construction given in~\cite{Ben-DavidCHT16} shows that $k$-nonblocking can be achieved with a help mechanism
	whose worst-case cost is proportional to $k$ rather than $n$.
	This suggests that for small $k>1$, $k$-nonblocking implementations could be useful
	in practical settings
	as a compromise between the efficiency of non-blocking and the stronger progress guarantee of wait-freedom.
	
	\subsection{A new progress requirement}
	In this paper, we first identify a weakness in the non-blocking and $k$-nonblocking progress requirements,
	and propose a stronger versions of these two.
	To illustrate this weakness, consider a \emph{non-blocking queue} that is used by two processes: a \emph{producer}
	that repeatedly enqueues items (e.g., jobs to be executed)
	and a \emph{consumer} that repeatedly dequeues items for processing.
	Here, the non-blocking progress guarantee of the queue could be useless:
	by definition, a non-blocking queue allows runs in which 
	the producer succeeds in enqueueing every item (of an infinite sequence of items)
	while the consumer is not able to dequeue even one.\footnote{Intuitively, in a non-blocking \emph{implementation},
		this may occur because the producer is faster than the consumer
		and its repeated enqueues interfere with every dequeuing attempt by the slower consumer.}
	So a non-blocking queue may prevent the producer from ever communicating any item to the consumer.
	Non-blocking also allows the symmetric case: the \emph{only}
	process to ever succeed could  be the (fast) consumer that keeps dequeuing $\bot$ (``the queue is empty''),
	while the (slow) producer is not able to enqueue even one item.
	
	It is easy to see that,
	for $k<n$, the $k$-nonblocking liveness requirement,
	which guarantees that up to $k$ processes make progress,
	suffers from the same problem.
	For example, if $k$ producers enqueue items into a $k$-nonblocking queue,
	and one consumer dequeues items from that queue,
	it is possible that only the $k$ enqueuers make progress while
	the consumer may fail to dequeue any item.
	
	The above weakness of the non-blocking and $k$-nonblocking progress requirements
	extends to other shared objects.
	For example, consider a non-blocking or $k$-nonblocking \emph{dictionary},
	with the standard operations of \emph{insert}, \emph{delete} and \emph{search}:
	the liveness requirement does not prevent
	executions where the only operations to complete are, say, \emph{insert}s
	(which intuitively ``write'' the state of the dictionary),
	while every process that attempts to execute a \emph{search} (which ``reads'' the state of the dictionary)
	may be prevented from completing its operation;
	this effectively renders the shared dictionary useless.
	
	It is important to note that this weakness is not
	just permitted by the definition of non-blocking,
	but it is actually exhibited in some existing
	implementations of non-blocking objects.
	For example, consider the simple non-blocking version of the \emph{atomic snapshot} algorithm
	described by Afek et al. in~\cite{AADGMS_93}.
	It is easy to see that this non-blocking algorithm allows
	the writers of the atomic snapshot to be the only ones to ever succeed,
	while all the readers (i.e., scanners) are stuck forever trying to read.
	
	In general, the utility of an object is given by the set of operation types that it supports.
	By definition, the non-blocking or $k$-nonblocking progress requirements (for $k<n$) allow
	\emph{all but one} of the operation types to be effectively suppressed;
	this may significantly degrades the object's functionality.
	
	In view of the above, we propose a natural strengthening of the $k$-nonblocking progress requirement
	(for $k =1$ this is a strengthening of non-blocking) that 
	ensures that no operation type is ever suppressed;
	we call it \emph{differentiated $k$-nonblocking}, or $\DNB{k}$, for short.
	Intuitively, $\DNB{k}$ guarantees that, for every operation type $T$,
	if a process is stuck while performing an operation of type $T$
	(i.e., it takes an infinite
	number of steps without completing this operation)
	then at least $k$ processes complete infinitely many operations \emph{of type $T$}.
	Note that $\DNB{k}$ guarantees that an object remains fully functional, in the sense that no operation type
	can be suppressed;
	as with $k$-nonblocking, $\DNB{k}$ also guarantees that at least $k$ processes make progress;
	in other words, $\DNB{k}$ implies $k$-nonblocking. \footnote{Note that
		a $\DNB{k}$ object with $d$ types of operations is \emph{incomparable} to a ${kd}$-nonblocking object.}
	
	\subsection{A corresponding algorithm and its proof}
	To demonstrate the feasibility of this new progress property,
	we derive and prove the correctness of a $\DNB{2}$ algorithm for a shared queue: 
	this algorithm guarantees that at least
	two enqueuers \emph{and} at least two dequeuers, all operating concurrently, make progress
	(we sometimes refer to this as the ``2+2 non-blocking'' property).
	We chose to design a queue because of its wide use in distributed computing,
	and we chose $k=2$ because we wanted an algorithm that guarantees progress by more than one process
	while remaining efficient.
	Intuitively, our $\DNB{2}$ queue algorithm is efficient because its
	help mechanism is light-weight:
	(1)~each process makes at most a \emph{single} attempt to help \emph{one}
	other process to do its operation
	before doing its own operation,
	and
	(2)~dequeuers help only dequeuers,
	and enqueuers help only enqueuers.
	In fact, in our  algorithm, enqueuers do not interfere (by helping or competing with) dequeuers, and vice versa.

	In an experimental evaluation of our $\DNB{2}$ queue algorithm, we identified another
	desirable feature of this algorithm, which we call \emph{fairness} and describe below.
	
	\subsection{Fairness}
	%
	%
	%
	
	It has been observed that non-blocking objects often behave as if they were wait-free, and this is
	because the pathological scenarios that
	lead to blocking are rarely,
	if ever, encountered in practice~\cite{HerlihyS11}.
	Wait-freedom however, does not imply fairness -- a property that may be desirable.
	To illustrate what we mean by fairness,
	suppose that two processes $p$ and $q$ repeatedly apply the same type of operation on a non-blocking object.
	If both processes have similar \emph{speed} (i.e., steps per unit of time) then,
	by symmetry, we would expect both $p$ and $q$ to complete roughly the same number of operations.
	If, however, $p$ and $q$ has significantly different speeds, one of the processes,
	typically the faster one, may be able to complete many more operations (compared to the slower process)
	than its greater speed actually justifies.
	For example, suppose $p$ is twice as fast as $q$. 
	In this case the object is fair if it allows $p$ to complete about twice as many~operations~as~$q$;
	but it is unfair if it allows $p$ to do, say, four times as many operations as $q$.
	
	In general, we say that an object (implementation) is \emph{fair}
	if the following holds:
	when $m$ processes
	with speeds $s_1 , s_2, \ldots s_m$,
	repeatedly apply operations of the same type,
	then, for each pair of processes $i$ and $j$,
	the ratio of the number of operations completed by each of the two processes is approximately the ratio
	of their speeds $s_i / s_j$.
	Equivalently, each process $i$
	completes about ${s_i}/({s_1 + s_2 + \ldots + s_m})$ of the total number of operations
	that all the $m$ processes complete; this fraction is what we call the \emph{fair share} of process $i$.
	
	As an example of the difference between achieving wait-freedom and achieving fairness,
	consider the \emph{Single CAS Universal (SCU)} class of non-blocking algorithms~\cite{alistarh2016lock}.
	In systems with a stochastic scheduler,
	these algorithms were shown to be wait-free with probability 1.
	This was proven for systems where
	processes have
	the same probability of taking the next step (intuitively, they have the same speed)~\cite{alistarh2016lock},
	and also for systems where
	processes have different probabilities of taking the next step
	(intuitively, they have different speeds)~\cite{alistarh2015lock}.
	But these algorithms
	are
	\emph{not} fair:
	even in systems with only two processes,
	they allow
	a process $p$ that is $x$ times as fast as a process $q$ to complete about
	$x^2$ times as many operations as $q$~\cite{alistarh2015lock},\footnote{With SCU algorithms 
		when there are $m$ processes
		with speeds $s_1 , s_2, \ldots s_m$,
		the expected ratio of the number of operations completed any two processes $i$ and $j$ is
		proportional to the \emph{square} of the ratio
		of their speeds $s_i / s_j$~\cite{alistarh2015lock}.}
	whereas in a fair implementation $p$ would complete about $x$ times as many operations as $q$.
	

	\subsection{Some experimental results}
	To evaluate the practicality of the $\DNB{2}$ queue algorithm,
	we implemented it and 
	the well-known non-blocking queue algorithm by Michael and Scott (MS)~\cite{MichaelScott96},
	and compared their performance under various metrics.
	The two algorithms differ in their liveness properties:
	the MS algorithm is non-blocking\footnote{It turns out that it is actually a $\DNB{1}$ algorithm,
		although this property had not been defined at that time.}
	and thus does not incur the cost of a help mechanism;
	while our algorithm satisfies the stronger $\DNB{2}$ and employs a light-weight help mechanism.
	
	Our experimental results are only preliminary, and they are only meant to give a rough idea
	on the potential performance of the $\DNB{2}$ algorithm compared to a commonly used algorithm.
	In particular, we did not try to optimize our implementation of the $\DNB{2}$ algorithm,
	and, similarly, we implemented the MS algorithm as written in the original paper 
	without optimizations found in existing implementations.
	
	These experimental results strongly suggest that,
	besides the stronger progress guarantee of the $\DNB{2}$ algorithm,
	the $\DNB{2}$ algorithm
	drastically increases fairness across a wide range of process speeds,
	at the cost of a marginal reduction in throughput.
	As expected, both the MS algorithm and our $\DNB{2}$ algorithm are
	fair when all processes have the same speed, but we observed that:
	(1) the fairness of the MS algorithm breaks down
	\emph{when processes have different speeds} and
	it deteriorates rapidly as the differences in speeds increase,
	whereas
	(2) the $\DNB{2}$ algorithm maintains a good level of fairness 
	throughout a wide range of speed differences.
	
	We first considered a system
	with only two enqueuers and two dequeuers 
	where we slowed one of the two enqueuers and one of the two dequeuers  by a factor of $k$,
	for each $k$ in the range $2 \le k < 20$.
	With the MS algorithm, we observed that the slower enqueuer and dequeuer completed much less than their fair share of operations:
	for $k =2$, they got about 45\% and 26\% of their fair share, respectively;
	for $k =3$, they got about 23\% and 8\% of their fair share;
	for $k =5$, they got about 8\% and 1\% of their fair share;  and
	for $k =8$, they got only about 2\% and 0.22\% of their fair share.
	When we reached $k=11$, the slow dequeuer was prevented from completing \emph{any} dequeue operation,
	while the other dequeuer completed about 49,000 operations; and the slow enqueuer
	managed to complete only 43 operations,
	while the other enqueuer completed almost 59,000 operations.
	In sharp contrast, with the $\DNB{2}$ algorithm, even when the slow enqueuer and dequeuer
	were slowed down by a factor of $k=8$,
	they still completed at least 60\% of their fair share of operations.
	
	While the fairness of the MS algorithm can be very poor, however,
	its throughput (i.e., the total number of completed operations)
	remained consistently higher than our $\DNB{2}$ algorithm:
	for $k$ ranging from 1 (where all four processes have the same speed) to 19 (where slow processes are 19 times slower than fast ones), the ``enqueue" throughput of the $\DNB{2}$ algorithm was about 67\% to 62\%
	of the throughput of the MS algorithm, while its ``dequeue" throughput was about  88\% to 76\% of the MS algorithm.

	Since a $\DNB{2}$ queue is, by definition, wait-free for two enqueuers and two dequeuers,
	one may wonder whether our algorithm's good fairness behaviour
	holds up in systems with more processes.
	To check this, we also run experiments with a system with 8 enqueuers and 8 dequeuers.
	These experiments confirmed that the algorithm remains fair throughout a wide range of process speed differences.
	Moreover, its throughput is closer to the throughput of the MS algorithm in this larger system.

	It is also worth noting that our experiments confirmed that the $\DNB{2}$ algorithm
	has the following desirable feature: the throughput of the enqueuers depends only on their speed,
	it does not depend on the speed or even the presence of dequeuers;
	symmetrically the throughput of dequeuers is independent of the speed or presence of enqueuers.
	For example, in one experiment, we considered a system with two groups, 8 enqueuers and 8 dequeuers,
	where all processes have the same speed. 
	We first experimented with a run where every process in \emph{both} groups applied operations,
	and we saw that each enqueuer completed about 12000 operations and each dequeuer
	completed about 14500 operations.
	We then experimented with two other runs:
	one with \emph{only} the 8 enqueuers present, and one with \emph{only} the 8 dequeuers present.
	In both cases, each participating process completed about the same number of operations as in the first run:
	namely, 12000 for the enqueuers and 14500 for the dequeuers.
	So the performance of the two groups are indeed independent of each other.
	As we noted earlier, this is because in our algorithm enqueuers do not interfere with dequeuers, and vice-versa.

	The dramatic improvement of fairness exhibited by the $\DNB{2}$ algorithm
	(over the MS algorithm) is not accidental: it is due to the way we
	designed it to guarantee progress by at least two enqueuers and at least two dequeuers, as we now explain.

	%
	%
	
	\subsection{The help mechanism and fairness}
	At a high level, the help mechanism for enqueue operations works as follows
	(the help mechanism for dequeue operations is symmetric).
	There is a single register $R_E$ that is shared by all enqueuers; $R_E$ contains at most one call for help by some enqueuer.
	To execute an enqueue operation, a process $p$ first tries \emph{once} to help the process
	that it sees in $R_E$ (if any).
	Whether $p$ succeeds in this single attempt or not, 
	it then tries to perform its \emph{own} operation.
	If this attempt fails,
	$p$ overwrites any existing call for help in $R_E$ with its own call for help;
	then it tries to perform its own operation again.
	This attempt to perform its own operation, followed by overwriting $R_E$ with its call for help if the attempt fails,
	continues until $p$ succeeds in completing its operation, whether on its own or with the help of another process
	which saw $p$'s call for help in $R_E$.
	Note that with this help mechanism,
	all the enqueuers that need help effectively compete with each other by (over-)writing their call for help in the \emph{same} register $R_E$. 
	We now explain how this helps achieve fairness.
	
	Suppose a process $i$ is twice as fast as a process $j$,
	and that both $i$ and $j$
	repeatedly fail to perform their enqueue operations, and so they repeatedly call for help
	via the shared register $R_E$.
	Then the periods when $R_E$ contains the calls for help by $i$ are approximately twice as long as the periods
	when $R_E$ contains the calls for help by $j$.
	So whenever a process reads $R_E$ to see which process to help, it is twice as likely to see $i$ as it is to see $j$.
	Thus, $i$ will be able to complete about twice as many operations as $j$,
	which is fair since $i$ is twice as fast as $j$.
	More generally, the ratio of the periods of times that the calls of help of any two processes $i$ and $j$ remain in $R_E$
	is approximately the ratio of their speeds $s_i / s_j$,
	so the ratio of the number of operations completed by $i$ and $j$ via the \DNB{2} help mechanism is about the ratio
	of their speeds $s_i / s_j$, as fairness requires.
	
	This light-weight help mechanism to achieve the 2-nonblocking property is not limited to queues.
	In fact, in Appendix~\ref{appendix-universal} we give a \emph{universal construction} for 2-nonblocking objects
	that essentially employs the same help mechanism:
	A process first makes a \emph{single} attempt to help one other process
	(the ``altruistic'' phase) and then it repeatedly tries to perform its \emph{own} operation
	until it is done (the ``selfish'' phase).
	All the processes that need help compete by (over-)writing their call for help
	on the \emph{same} shared register.
	This guarantees the 2-nonblocking property,
	and it also promotes fairness as explained above.
	\section{Model sketch}
	We consider a standard distributed system where asynchronous processes that may fail by crashing
	communicate via shared registers and other objects,
	including synchronization objects such as \emph{compare\&swap (CAS)}~\cite{Herlihy91}.
	In such systems, shared objects can be used to \emph{implement} other shared objects
	such that: (1)~the implemented objects are \emph{linearizable}~\cite{HerlihyWing90}
	and (2) they satisfy some liveness requirement.
	In particular, we consider the $k$-nonblocking liveness requirement~\cite{BushkovGuerraoui15,Ben-DavidCHT16}:
	
	\begin{definition}
		$k$-nonblocking (\NB{k}):
		if a process invokes an operation
		and takes infinitely many
		steps without completing it,
		then at least $k$ processes complete infinitely many operations.
	\end{definition}
	
	We also introduce a liveness requirement, called \emph{differentiated $k$-nonblocking} or $\DNB{k}$ for short,
	that takes into consideration the fact that a shared object may have several operation \emph{types}
	(e.g., queues have enqueue and dequeue operations, atomic snapshot have write and scan operations, etc.).
	The differentiated $k$-nonblocking property of an object (or object implementation) is defined as follows:
	
	\begin{definition}
		Differentiated $k$-nonblocking (\DNB{k}):
		for every operation type $T$,
		if a process invokes an operation \emph{of type $T$}
		and takes infinitely many
		steps without completing it,
		then at least $k$ processes complete infinitely many operations \emph{of type $T$}.
	\end{definition}
	\section{A \DNB{2} algorithm for shared queue}

	\makeatletter
	\renewcommand{\ALG@name}{Fig.}
	\makeatother 
	
	
	\setcounter{algorithm}{0}

	\begin{figure}[t]
		\hrule
		\smallskip
		
		\begin{center}
			\input{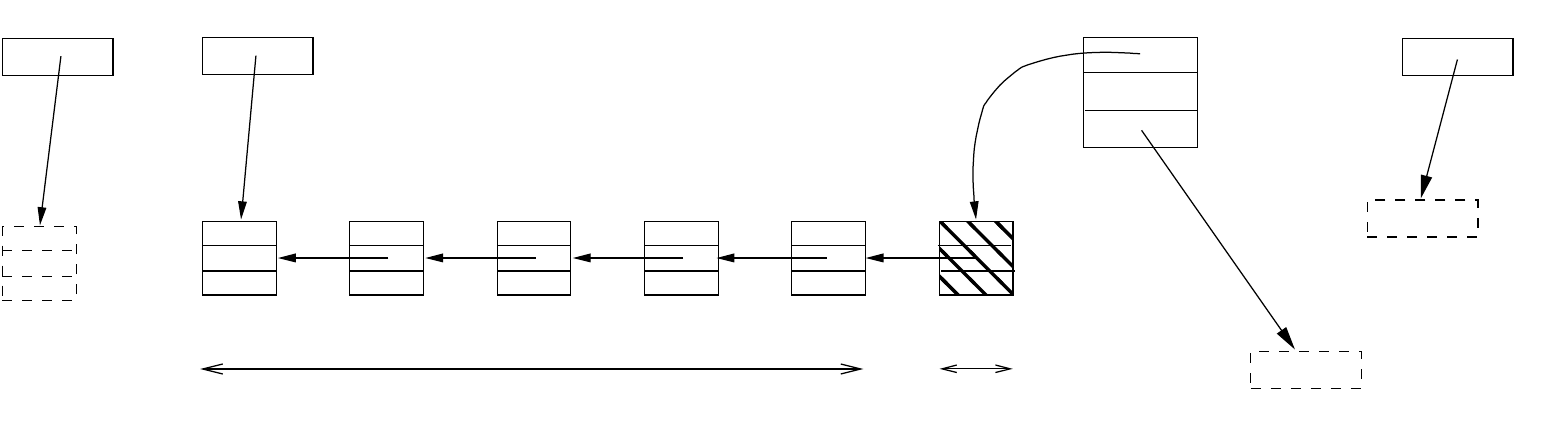_t}
		\end{center}
		\caption{The data structures used by the algorithm}\label{data-structures-fig}
		\smallskip
		\hrule
	\end{figure}

	\subsection{Description of the algorithm}\label{algo-description}
	\textbf{Data structures.}\enspace
	We now describe the data structures used to represent the queue
	and to implement the helping mechanisms.
	(The algorithm incorporates two independent helping mechanisms
	to enforce the 2-nonblocking property,
	one for enqueuers and one for dequeuers.)
	The data structures are listed in Figure~\ref{alg0} and illustrated in Figure~\ref{data-structures-fig}.

	The queue consists of a linked list of nodes,
	containing the elements that have been enqueued but not yet dequeued,
	\bemph{as well as the last element dequeued},
	in the order in which these elements were enqueued.
	Each node is a record with the following three fields:
	\begin{compactitem}
		\item
		$\val$, the actual element enqueued;
		\item
		$\fNext$, a pointer to the next node in the queue
		(or $\mathnull$, if this node is the last one); and
		\item
		$\fFlag$, a Boolean set to true
		to indicate that the node has been inserted into the queue,
		used by the helping mechanism for enqueuers. 
	\end{compactitem}
	
	The tail of the queue (the end to which elements are added)
	is identified through a pointer, called $\gTail$,
	that points to the last node in the queue,
	unless a node is in the process of being added to the queue,
	in which case $\gTail$ could be temporarily pointing to the
	penultimate node in the queue.

	The head of the queue (the end from which elements are removed)
	is identified through a record, called $\gHead$,
	that contains the following information:
	\begin{compactitem}
		\item
		$\ptr$, a pointer to \bemph{the last node that was dequeued}.
		The queue is empty if and only if
		$\gHead.\ptr$ and $\gTail$ point to the same node.	
		If the queue is not empty,
		the first element in the queue ---
		i.e., the next element to be dequeued ---
		is in the node pointed to by $\gHead.\ptr\rightarrow\fNext$.
	\end{compactitem}

	\begin{algorithm}[h]
		\caption{Shared objects for the \DNB{2} queue algorithm}
		
		\label{alg0}
		\textbf{Structures:}
		\begin{algorithmic}
			\Statex \textbf{structure} $\textsc{node}$ \{
			\Statex \hspace{4mm} $\val$: a register containing an integer.
			\Statex \hspace{4mm} $\fNext$: a compare\&swap containing a pointer to a node.
			\Statex \hspace{4mm} $\fFlag$: a register containing a boolean.
			\}
		\end{algorithmic}
		\vspace{10pt}
		\textbf{Shared Objects:} 
		\begin{algorithmic}
			\Statex $\gINode$: A node, initially with \{$\val = $ arbitrary, $\fNext = \mathnull$, and $\fFlag = 1$\}. 
			\Statex $\gIDQ$: A location containing an arbitrary non-$\mathnull$ value.
			\Statex $\gTail$: a compare\&swap containing a pointer to a node, 
			initially pointing to $\gINode$.
			\Statex $\gHead$: a compare\&swap with fields:
			\Statex \hspace{4mm} $\ptr$: a pointer to a node,
			initially pointing to $\gINode$.
			\Statex \hspace{4mm} $\val$: an integer, initially an arbitrary non-$\mathnull$ value.
			\Statex \hspace{4mm} $\add$: a pointer to a location that stores a dequeued value or $\mathnull$, initially pointing to $ \gIDQ $.
			\Statex $\gAnnE$: a register containing 
			a pointer to a node, initially $\gINode$.
			\Statex $\gAnnD$: a register containing a pointer to a location that stores a dequeued value or $\mathnull$, initially pointing to $ \gIDQ. $
		\end{algorithmic}
	\end{algorithm}
	\setcounter{algorithm}{0}
	\setcounter{figure}{1}
	\makeatletter
	\renewcommand{\ALG@name}{Algorithm}
	\makeatother

	\begin{compactitem}
		\item
		Two fields with information pertaining to
		the last dequeue operation,
		used by the helping mechanism for dequeuers:
		\begin{compactitem}
			\item
			$\val$, the last element dequeued.
			This is the same as $\gHead.\ptr\rightarrow\val$,
			except if the queue was empty when the last element was dequeued,
			in which case it is~$\emptyq$.
			\item
			$\add$, a pointer to a location reserved
			by the process that dequeued the last element;
			this location contains the element dequeued by that process or $\mathnull$.
		\end{compactitem}
	\end{compactitem}

	Initially, the linked list representing the queue
	contains a single node $\gINode$,
	with both $\gTail$ and $\gHead.\ptr$ pointing to it.
	The $\val$ field of $\gINode$ is arbitrary,
	$\fNext$ is $\mathnull$, and
	$\fFlag$ is~1.
	(This node can be thought of as representing
	a fictitious element that was enqueued and then dequeued
	before the algorithm starts.)
	The remaining two fields of $\gHead$, $\val$ and $\add$,
	are initialized to a non-$\mathnull$ value and
	a pointer to $\gIDQ$ (a location that contains an arbitrary non-$\mathnull$ value).
	
	To implement the helping mechanisms among enqueuers and among dequeuers,
	the algorithm uses two shared registers:
	\begin{compactitem}
		\item
		$\gAnnE$ contains a pointer to a node that some process
		wishes to add to the linked list of nodes
		representing the queue
		(initially a pointer to $\gINode$ whose $\fFlag$ field is~1,
		indicating that no help is needed); and
		\item
		$\gAnnD$ contains a pointer to the location reserved by a process $p$
		that wishes to perform a dequeue operation;
		it is intended for a helper to store the value dequeued for $p$
		(initially a pointer to $\gIDQ$, a location that contains
		a \emph{non}-$\mathnull$ value,
		indicating that no help is needed).
	\end{compactitem}

	\medskip\noindent
	\textbf{The enqueue operation.}\enspace
	Consider a process $p$ that wishes to enqueue an element $v$
	(see procedure $\NQ(v)$, lines~\ref{A1}-\ref{A19}).
	Process $p$ must add a node containing $v$
	to the tail end of the linked list of nodes representing the queue.
	
	Roughly speaking, $\NQ(v)$ consists of two phases:
	In the \emph{altruistic phase} (lines~\ref{A2}--\ref{A5})
	$p$ tries \bemph{once} to help some process that
	has asked for help to enqueue an element.
	Whether it succeeds or fails, it then proceeds to
	the \emph{selfish phase} (lines~\ref{A12}--\ref{A14}),
	where $p$ keeps trying to enqueue its own element
	(a node it has prepared in lines~\ref{A8}--\ref{A11} to append to the linked list)
	until it succeeds to do so. Each time it tries and fails,
	$p$ asks for help by writing into the shared variable $\gAnnE$
	a pointer to the node it wants to append to the linked list.
	It is possible that the enqueue operation does not terminate
	because a process is stuck forever

	\begin{algorithm}[h]
		\caption{The \DNB{2} queue algorithm (code for process $p$)}
		\label{alg1}
		\begin{algorithmic}[1] 
			\Procedure{$ \NQ $}{$v$} ~~~\{ $v \in \Z$ \}\phantomsection\label{A1}
			\State $\lAnnE := \gAnnE.\textsc{read}()$\phantomsection\label{A2}
			\State $\trynq(\lAnnE)$\phantomsection\label{A5}
			\State $\lNode :=$ a new node\phantomsection\label{A8}
			\State $\lNode \rightarrow \val.\textsc{write}(v)$\phantomsection\label{A9}
			\State $\lNode \rightarrow \fNext.\textsc{write}(\mathnull)$\phantomsection\label{A10}
			\State $\lNode \rightarrow \fFlag.\textsc{write}(0)$\phantomsection\label{A11}
			\While{$ \trynq(\lNode)=\vFail $}\phantomsection\label{A12}
			\State $ \gAnnE.\textsc{write}(\lNode) $\phantomsection\label{A13}
			\EndWhile \phantomsection\label{A14}
			\State \textbf{return} $ \vDone $\phantomsection\label{A15}
			\EndProcedure\phantomsection\label{A19}
			\Statex

			\Procedure{$ \DQ $}{$ $}\phantomsection\label{B1}
			\State $\lAnnD := \gAnnD.\textsc{read}()$\phantomsection\label{B2}
			\If{$^*\lAnnD.\textsc{read}() = \mathnull$}\phantomsection\label{B5}
			\State $\trydq(\lAnnD)$\phantomsection\label{B6}
			\EndIf\phantomsection\label{B7}
			\State $\ladd:=$ a pointer to a new location\phantomsection\label{B10}
			\State $ ^*\ladd.\textsc{write}(\mathnull) $\phantomsection\label{B10.1}
			\Repeat\phantomsection\label{B11}
			\State $t := \trydq(\ladd)$\phantomsection\label{B12}
			\If{$ t=\vFail $}\phantomsection\label{B13}
			\State $\gAnnD := \ladd$\phantomsection\label{B14}
			\EndIf\phantomsection\label{B17}
			\Until {$t\ne\vFail$} \phantomsection\label{B18}
			\State \textbf{return} $t$\phantomsection\label{B19}
			\EndProcedure\phantomsection\label{B22}
			\rememberlines
		\end{algorithmic}
	\end{algorithm}

	\setcounter{algorithm}{0}
	
	\noindent in the selfish phase.
	As we will show, however, 
	this can happen only if at least \emph{two} other processes succeed
	in enqueuing elements infinitely often.
	Note that an enqueuer never helps a dequeuer.
	
	Both $p$'s attempt to help another enqueuer and
	its attempt to enqueue its own element
	are carried out by the procedure $\trynq(r)$,
	discussed below,
	where $r$ is a pointer to the node to be appended to the list.
	$\trynq(r)$ returns $\TRUE$
	if it succeeds to append to the list the node pointed to by $r$;
	otherwise it returns $\FALSE$.
	The latter happens only if some other concurrent enqueue operation
	succeeds in appending a node to the list.
	
	We now describe the procedure $\trynq(r)$
	(lines~\ref{C1}--\ref{C18}).
	The procedure first creates
	a local copy $\lTail$ of the tail pointer $\gTail$, and
	a local copy $\afterTail$ of
	$\gTail\rightarrow\fNext$
	(lines~\ref{C2}--\ref{C3}).
	Ordinarily $\afterTail=\mathnull$,
	unless some call to $\trynq$ has appended a node
	to the linked list but has not (yet) updated the $\gTail$ pointer.
	\vspace{10pt}
	
	\begin{algorithm}[h]
		\caption{The $\trynq$ and $\trydq$ procedures}\label{alg2}
		\begin{algorithmic}[1]\resumenumbering
			\Procedure{\trynq}{$\lNode$}\phantomsection\label{C1}
			\State $\lTail := \gTail.\textsc{read}()$\phantomsection\label{C2}
			\State $\lnext := \lTail \rightarrow \fNext.\textsc{read}()$\phantomsection\label{C3}
			\If{$\lNode \rightarrow \fFlag.\textsc{read}() = 1$}\phantomsection\label{C4}
			\State $\lTail := \gTail.\textsc{read}()$\phantomsection\label{C2repeat}
			\State $\lnext := \lTail \rightarrow \fNext.\textsc{read}()$\phantomsection\label{C3repeat}
			\If{$ \lnext\ne\mathnull $}\phantomsection\label{C4.1}
			\State $ \lnext\rightarrow\fFlag.\textsc{write}(1) $\phantomsection\label{C5}
			\State $ \gTail.\CAS(\lTail,\lnext) $\phantomsection\label{C5.1}
			\EndIf\phantomsection\label{C5.2}
			\State \textbf{return} $ \vDone $\phantomsection\label{C5.3}
			\EndIf\phantomsection\label{C6}
			\If{$\lnext \ne \mathnull$} \phantomsection\label{C7}
			\State $\lnext \rightarrow \fFlag.\textsc{write}(1)$\phantomsection\label{C14}
			\State $\gTail.\CAS(\lTail,\lnext)$\phantomsection\label{C15}
			\Else
			\If{$\lTail \rightarrow \fNext.\CAS(\mathnull,\lNode)$}\phantomsection\label{C8}
			\State $\lNode \rightarrow \fFlag.\textsc{write}(1)$\phantomsection\label{C9}
			\State $\gTail.\CAS(\lTail,\lNode)$\phantomsection\label{C10}
			\State \textbf{return} $\vDone$ \phantomsection\label{C11}
			\EndIf\phantomsection\label{C12}
			\EndIf\phantomsection\label{C17}
			\State \textbf{return} $\vFail$ \phantomsection\label{C16}
			\EndProcedure\phantomsection\label{C18}
			\rememberlines
			\Statex

			\Procedure{$ \trydq $}{$\ladd$}\phantomsection\label{D1}
			\State $\lHead := \gHead.\textsc{read}()$\phantomsection\label{D2}
			\State $\lTail := \gTail.\textsc{read}()$\phantomsection\label{D3}
			\State $ ^*(\lHead.\add).\textsc{write}(\lHead.\val) $\phantomsection\label{D6}
			\If{$^*\ladd.\textsc{read}() \neq \mathnull$}\phantomsection\label{D9}
			\State \textbf{return} $^*\ladd.\textsc{read}()$\phantomsection\label{D10}
			\EndIf\phantomsection\label{D11}
			\If{$\lHead.\ptr = \lTail$}\phantomsection\label{D13}
			\If{$\gHead.\CAS(\lHead, \langle \lHead.\ptr, \emptyq, \ladd \rangle)$}\phantomsection\label{D15}
			\State \textbf{return} $\emptyq$\phantomsection\label{D16}
			\EndIf\phantomsection\label{D17}
			\Else\phantomsection\label{D22}
			\State $\lnext := \lHead.\ptr \rightarrow \fNext.\textsc{read}()$\phantomsection\label{D4}
			\State $v := \lnext \rightarrow \val.\textsc{read}()$\phantomsection\label{D23}
			\If{$\gHead.\CAS(\lHead, \langle \lnext, v, \ladd \rangle)$}\phantomsection\label{D24}
			\State \textbf{return} $v$\phantomsection\label{D25}
			\EndIf\phantomsection\label{D26}
			\EndIf\phantomsection\label{D27}
			\State \textbf{return} $\vFail$\phantomsection\label{D29}
			\EndProcedure \phantomsection\label{D30}

		\end{algorithmic}
		
	\end{algorithm}

	$\trynq(r)$ checks whether the node to which $r$ points
	has $\fFlag=1$,
	indicating that the node has already been threaded to the list (line~\ref{C4}).
	If so,
	$\trynq$ refreshes $\lTail$ and $\afterTail$
	(lines~\ref{C2repeat}--\ref{C3repeat}).
	It then ensures that the node pointed to by $\afterTail$, if any,
	is fully incorporated into the data structures:
	its $\fFlag$ is set to~1 and $\gTail$ points to it,\footnote{Pointer $\gTail$ is updated using a CAS operation
		to ensure that a late application does not over-write
		an earlier update.}
	and returns $\TRUE$
	(lines~\ref{C4.1}--\ref{C5.3}).
	
	If the node to which $r$ points does not have $\fFlag=1$,
	then the $\trynq(r)$ procedure tests whether $\afterTail=\mathnull$ (line~\ref{C7}).
	If not,
	a concurrent $\NQ$ operation succeeded in appending its node to the list.
	So $\trynq(r)$ ensures that the appended node
	is fully incorporated into the data structures:
	its $\fFlag$ is set to~1 and $\gTail$ points to it (lines~\ref{C14}-\ref{C15});
	and returns $\FALSE$ (line~\ref{C16}).
	If, on the other hand, $\trynq(r)$ found that $\afterTail=\mathnull$,
	it tries to append to the list the node pointed to by $r$
	by applying a $\CAS(\mathnull,r)$ operation on 
	$\lTail\rightarrow\fNext$
	(line~\ref{C8}).
	If this $\CAS$ is successful, $\trynq$
	incorporates into the list the new node pointed to by $r$
	by setting its $\fFlag$ to~1 and ensuring that $\gTail$ points to it,
	and then returns $\TRUE$ (lines \ref{C9}--\ref{C11}).
	If the $\CAS$ is unsuccessful,
	then another enqueue operation succeeded in appending to the linked list
	a node other than $r$.
	Thus, $\trynq$ returns $\FALSE$ (line~\ref{C16}).
	
	\medskip\noindent
	\textbf{The dequeue operation.}\enspace
	Consider a process $p$ that wishes to perform a dequeue operation
	(see procedure $\DQ$, lines \ref{B1}--\ref{B22}).
	Process $p$ must return the first element in the queue
	(or $\emptyq$, if the queue is empty)
	after updating the information in $\gHead$
	to reflect the removal of that element.
	
	Similar to $\NQ$, the $\DQ$ operation consists of
	an altruistic and a selfish phase:
	In the altruistic phase (lines~\ref{B2}--\ref{B7})
	$p$ tries \bemph{once} to help some process that
	has asked for help to dequeue an element.
	Whether it succeeds or fails,
	$p$ then proceeds to the selfish phase (lines~\ref{B10}--\ref{B18}),
	where it keeps trying to dequeue an element for itself
	until it succeeds to do so;
	each time it tries and fails,
	$p$ asks for help by writing into the shared variable $\gAnnD$
	a pointer to a location it has reserved (line~\ref{B10}),
	in which the helper process will place the dequeued element
	for $p$ to retrieve.
	This location is initialized to the value $\mathnull$ (line~\ref{B10.1}),
	which indicates that $p$ (the process requesting help)
	has not yet received help;
	when this location contains a non-$\mathnull$ value,
	$p$'s dequeue operation has been helped, and
	$p$ can retrieve the dequeued element from that location.
	So, the altruistic loop starts with $p$ making a local copy $\lAnnD$
	of the $\gAnnD$ shared variable used to ask for help (line~\ref{B2}).
	If the location to which $\lAnnD$ points contains $\mathnull$,
	$p$ tries to help a dequeuer (lines~\ref{B5}-\ref{B7}).
	
	It is possible that the dequeue operation does not terminate:
	A process may be stuck forever in the selfish phase.
	As we will show 
	this can happen only if at least \emph{two} other processes succeed
	in dequeuing elements infinitely often.
	Note that a dequeuer never helps an enqueuer.
	
	Both $p$'s attempt to help another dequeuer and
	its attempt to dequeue its own element
	are carried out by the procedure $\trydq(r)$,
	discussed below,
	where $r$ is a pointer to a location reserved by the process
	on behalf of which this attempt to dequeue is being made.\footnote{The
		process on behalf of which $p$'s call to $\trydq(r)$
		is trying to dequeue an element
		is a process other than $p$,
		if this call is made in the altruistic phase;
		or $p$ itself,
		if this call is made in the selfish phase.}
	If it is successful, $\trydq(r)$ returns the element dequeued
	(or the special value $\emptyq$, if the queue is empty);
	otherwise it returns $\FALSE$.
	The latter happens only if some other concurrent dequeue operation
	succeeds in dequeuing an element.
	
	We now describe the procedure $\trydq(\ladd)$
	(lines~\ref{D1}--\ref{D30}). 
	This procedure first makes a local copy $\lHead$ of the $\gHead$ record
	and a local copy $\lTail$ of the $\gTail$ pointer
	(lines~\ref{D2}--\ref{D3}).
	Recall that the fields $\val$ and $\add$ of $\gHead$
	contain information about the last dequeue operation performed.
	As this information is about to be over-written,
	and the value dequeued by that operation (namely, $\gHead.\val$)
	may be needed to help the process that
	issued
	that dequeue operation
	(namely, the process that reserved the location pointed to
	by $\gHead.\add$),
	the dequeued element is copied to that location (line~\ref{D6}).
	
	Once the information in $\gHead$ has been safely stored,
	$\trydq(\ladd)$ examines the value in the location pointed to by $\ladd$.
	If this is non-$\mathnull$,
	the element dequeued for this operation
	(or $\emptyq$, if the queue was empty)
	has already been stored there by a helper.
	Thus, in this case,
	$\trydq(\ladd)$ merely returns
	the content of the location pointed to by $\ladd$ (lines~\ref{D9}--\ref{D11}).
	Otherwise, $\trydq(\ladd)$
	tries to dequeue an element
	by attempting to update $\gHead$.
	This is done by means of a CAS operation that changes $\gHead$
	if it has not been changed since $\trydq(\ladd)$ made a local copy of
	that variable in $\lHead$.
	The new information written into $\gHead$ depends on whether
	$\trydq(\ladd)$ found the queue to be empty (lines~\ref{D13}--\ref{D17})
	or not (lines~\ref{D22}--\ref{D27}).
	If the CAS operation succeeds, $\trydq(\ladd)$ returns
	the dequeued element (or $\emptyq$, if the queue is empty).
	%
	%
	%
	If the CAS operation fails,
	then some concurrent dequeuer succeeded
	in effecting its dequeue operation,
	and this call to $\trydq$ returns $\FALSE$ (line~\ref{D29}).
	
	\newpage
	\subsection{Proof of linearizability}
	
	Let $H$ be any history of the algorithm.
	Then we construct a completion $H'$ of $H$ as follows:
	\begin{itemize}
		\item For each incomplete $\NQ$ operation in $H$, 
		if it allocates a new node on line~\ref{A8} 
		and this node has been pointed to by the $\gTail$ pointer,
		then it is completed in $H'$ by returning $\vDone$;
		otherwise, it is removed from $H'$.
		
		\item For each incomplete $\DQ$ operation in $H$,
		if it allocates a new location on line~\ref{B10}
		and this location has been pointed to by $\gHead.\add$,
		then it is completed in $H'$ by returning the value of $\gHead.\val$
		at the earliest time when $\gHead.\add$ 
		points to the new location it allocated;
		otherwise, it is removed from $H'$. 
	\end{itemize}
	
	We then construct a linearization $L$ of $H'$ as follows:
	\begin{itemize}
		\item Each $\NQ$ operation in $H'$ is linearized
		when $\gTail$ first points to the new node it allocates on line~\ref{A8}.
		
		\item Each $\DQ$ operation is linearized
		when $\gHead.\add$ first points to the new location 
		it allocates in line~\ref{B10}
		if the return value of the $\DQ$ is not $\emptyq$;
		otherwise it is linearized
		when the first process that sets $\gHead.\add$ 
		to point to the new location
		last reads $\gTail$ in line~\ref{D3}.
	\end{itemize}
	
	The proof that this linearization is correct
	with respect to the specification of a queue
	can be found in \autoref{asec:appendix-algo-proof}. 
	
	\subsection{Proof of liveness: the \DNB{2} property}
	
	\begin{definition}\label{NB-def}
		A shared queue is called \emph{non-blocking for $\NQ$ operations},
		if it satisfies the following property:
		if a process invokes an $\NQ$ operation and takes infinitely many steps 
		without completing it, then
		there must be another process completing
		infinitely many $\NQ$ operations.
		\textit{Nonblocking for $\DQ$ operations} is defined similarly.
	\end{definition}
	
	\begin{definition}\label{2NB-def}
		A shared queue is called \emph{2-nonblocking for $\NQ$ operations},
		if it satisfies the following property:
		if a process invokes an $\NQ$ operation and takes infinitely many steps 
		without completing it, then
		there must be at least two other processes completing
		infinitely many $\NQ$ operations.
		\textit{2-nonblocking for $\DQ$ operations} is defined similarly.
	\end{definition}
	
	We first show that the algorithm is non-blocking, and
	then use this to prove that it is also 2-nonblocking.
	
	\begin{theorem}\label{NB-proof}
		The algorithm is non-blocking for both $\NQ$ and $\DQ$ operations.
	\end{theorem}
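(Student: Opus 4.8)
The plan is to prove the two symmetric statements separately; I describe the argument for $\NQ$ in detail, the one for $\DQ$ being analogous with $\gHead$ playing the role of $\gTail$. Suppose for contradiction that some process $p$ invokes an $\NQ$ operation, takes infinitely many steps without completing it, and yet no process completes infinitely many $\NQ$ operations. Since $p$ never returns, it is stuck forever in the selfish loop (lines~\ref{A12}--\ref{A14}), repeatedly calling $\trynq(\lNode)$ and always receiving $\vFail$; moreover $\lNode\rightarrow\fFlag$ stays $0$ throughout, for otherwise the test at line~\ref{C4} would make the next call return $\vDone$ and $p$ would complete.

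First I would show that, under this assumption, $\gTail$ advances infinitely often, i.e.\ over time it points to infinitely many distinct nodes. Track the value $X_i$ that $p$ reads into $\lTail$ at line~\ref{C2} in its $i$-th iteration. Since $\lNode\rightarrow\fFlag=0$, $p$ reaches line~\ref{C7}. If $\lnext\ne\mathnull$ there, then $p$ performs the $\CAS$ on $\gTail$ at line~\ref{C15}; whether it succeeds or fails, $\gTail$ has moved strictly past $X_i$ by the end of the iteration, so $X_{i+1}>X_i$. If instead $\lnext=\mathnull$, then the $\CAS$ at line~\ref{C8} must fail (otherwise $p$ would append $\lNode$ and return $\vDone$), and this can happen only because $X_i\rightarrow\fNext$ turned non-$\mathnull$ between lines~\ref{C3} and~\ref{C8}; since nodes are never removed from the list, within the next iteration $p$ is back in the first case and $\gTail$ advances. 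Hence $\gTail$ strictly advances at least once every two iterations, so it visits infinitely many distinct nodes.

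Every node other than $\gINode$ is linked into the list by a successful $\CAS$ at line~\ref{C8}, and $\gTail$ only ever moves onto such nodes; thus infinitely many line-\ref{C8} appends occur. As there are finitely many processes, some single process $q$ performs infinitely many of them; in particular $q$ takes infinitely many steps (so it does not crash) and $q\ne p$ (while stuck, $p$'s own line-\ref{C8} $\CAS$ always fails). The last step converts $q$'s appends into completed operations: each successful $\CAS$ at line~\ref{C8} is immediately followed by $q$ setting the flag and returning $\vDone$ from that $\trynq$ call, and a single $\NQ$ operation of $q$ can contain at most one such call in its altruistic phase (line~\ref{A5}) and at most one in its selfish phase (the call that terminates the loop). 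Hence $q$'s infinitely many appends are spread over infinitely many distinct $\NQ$ operations, each of which $q$ must finish before starting the next, contradicting the assumption that no process completes infinitely many $\NQ$ operations.

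The case of $\DQ$ is the mirror image, with $\gHead$ replacing $\gTail$: if $p$ is stuck in the selfish loop (lines~\ref{B11}--\ref{B18}) then $^*\ladd$ stays $\mathnull$ and every $\trydq(\ladd)$ returns $\vFail$, which forces the $\CAS$ on $\gHead$ at line~\ref{D15} or~\ref{D24} to fail; were $\gHead$ eventually fixed, $p$'s own $\CAS$ would succeed, so $\gHead$ must change infinitely often, each change being a successful dequeue $\CAS$ by some $q\ne p$, and the same per-operation counting shows that $q$ completes infinitely many $\DQ$ operations. I expect the main obstacle to be the first step --- arguing rigorously that $\gTail$ (resp.\ $\gHead$) is \emph{forced} to advance infinitely often rather than letting $p$ spin against a static structure --- which rests on the key observation that a failed append (resp.\ dequeue) $\CAS$ is always a witness that some \emph{other} operation has just made the structure grow, combined with the monotonicity of the list and of the $\gTail$ and $\gHead$ pointers.
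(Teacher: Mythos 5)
Your proof is correct and follows essentially the same route as the paper's: a proof by contradiction in which every $\vFail$ returned to the stuck process is a witness that some other process has just advanced $\gTail$ (resp.\ changed $\gHead$), so that infinitely many failures force infinitely many successful appends (resp.\ successful CASes on $\gHead$) by other processes, and hence infinitely many completed operations by some single process $q \ne p$. If anything, your write-up is more careful than the paper's at the last step: you convert infinitely many successful line~\ref{C8} CASes by $q$ into infinitely many \emph{completed} $\NQ$ operations of $q$ via the at-most-two-per-operation count and the sequentiality of $q$'s operations, whereas the paper passes directly from ``infinitely many nodes are added to the list'' to ``infinitely many $\NQ$ operations complete.''
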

	
	\begin{proof}
		First consider $\NQ$ operations.
		Suppose, for contradiction, that
		some process $p$ takes infinitely many steps for an $\NQ$ operation
		but no other process completes an infinite number of $\NQ$ operations.
		Thus, eventually
		$p$ makes continuously calls to $\trynq$,
		all of which return $\vFail$;
		and every other process
		ceases to execute any operations, or
		executes infinitely many $\DQ$ operations, or
		is stuck in an $\NQ$ or $\DQ$ operation forever.
		Note that $\trynq$ returns $\vFail$ either
		because the condition in line~\ref{C7} is true, or
		because the CAS in line~\ref{C8} fails.
		We now prove that each of these can happen only a finite number of times,
		contradicting that $p$ is stuck forever in an $\NQ$ operation.
		
		\begin{case}
			\item
			Process $p$ finds the condition in line~\ref{C7} to be true
			infinitely many times.
			Whenever $p$ finds the condition in line~\ref{C7} to be true
			it performs a CAS on $\gTail$ to move it forward 
			through the list of nodes, and
			this CAS fails only if another process has already moved the tail forward.
			This means that infinitely many nodes are added to the list,
			i.e., infinitely many $\NQ$ operations complete,
			contrary to our supposition.
			
			\item
			The CAS in line~\ref{C8} fails infinitely many times.
			Whenever this CAS fails, some node is added to the list.
			This implies that an infinite number of $\NQ$ operations complete,
			contrary to our supposition.
		\end{case}
		
		Next we prove that the algorithm is also non-blocking for $\DQ$ operations.
		By a similar argument, eventually some process
		$p$ makes continuously calls to $\trydq$,
		all of which return $\vFail$;
		and every other process
		ceases to execute any operations, or
		executes infinitely many $\NQ$ operations, or
		is stuck in an $\NQ$ or $\DQ$ operation forever.
		Note that a call to $\trydq$ returns $\vFail$ only
		if the CAS is line~\ref{D15} or~\ref{D24} fails.
		Thus, $p$ performs
		an infinite number of unsuccessful CAS operations on $\gHead$.
		This means that there are
		infinitely many successful CAS operations on $\gHead$.
		Therefore, an infinite number of $\DQ$ operations complete,
		contrary to our assumption.
	\end{proof}
	\newpage
	
	\begin{theorem}\label{2NBNQ-thm}
		The algorithm is 2-nonblocking for $\NQ$ operations.
	\end{theorem}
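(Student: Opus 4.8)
The plan is to argue by contradiction, bootstrapping from the non-blocking property already established in Theorem~\ref{NB-proof}. Suppose some process $p$ takes infinitely many steps in an $\NQ$ operation without completing it, yet \emph{fewer} than two other processes complete infinitely many $\NQ$ operations. By Theorem~\ref{NB-proof} at least one does, so exactly one other process---call it $q$---completes infinitely many $\NQ$ operations, and there is a time $t_0$ after which no process other than $q$ completes any $\NQ$ operation. The goal is to show that the help mechanism forces $q$ to complete a \emph{stuck} enqueuer's operation after $t_0$. This is the contradiction we seek: the helped process is either $p$ itself, contradicting that $p$ is stuck, or a process distinct from $q$ that completes an $\NQ$ operation after $t_0$.

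The first key step is to pin down who threads nodes. A node is inserted into the list exactly by a successful $\CAS$ at line~\ref{C8}, and each such insertion corresponds to one completed $\NQ$ operation (the owner of the threaded node). I would show that a process stuck forever in $\NQ$ never performs a successful $\CAS$ at line~\ref{C8}: a success in its selfish phase would complete its own operation (lines~\ref{C9}--\ref{C11}), and it executes its single altruistic phase only once. Combined with the fact that a process completing only finitely many $\NQ$ operations performs only finitely many line~\ref{C8} successes, this yields a time $t_1 \ge t_0$ after which \emph{every} successful insertion is performed by $q$. A second, routine invariant I would establish is that after $t_1$ the tail is ``caught up'' whenever $q$ begins an operation: since only $q$ threads nodes and $q$ advances $\gTail$ (line~\ref{C10}) immediately after each of its insertions, reading $\gTail$ at the start of $q$'s altruistic $\trynq$ gives $\gTail\rightarrow\fNext = \mathnull$.

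The heart of the argument then combines these facts. Because only $q$ threads after $t_1$ and the tail is caught up, $q$'s own selfish $\trynq(\lNode)$ finds $\afterTail=\mathnull$ at line~\ref{C7} and succeeds at line~\ref{C8} on its first attempt; hence $q$ never reaches line~\ref{A13} and stops writing to $\gAnnE$ after $t_1$. Consequently, every write to $\gAnnE$ after $t_1$ is performed by a process stuck in $\NQ$, depositing a pointer to its \emph{own} node, whose $\fFlag$ is still $0$. Crucially, such a node keeps $\fFlag=0$: setting its flag to $1$ requires threading it, which only $q$ can do, and doing so would complete its stuck owner---impossible while that owner remains stuck. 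Therefore, once the first such write occurs, $\gAnnE$ permanently points to a $\fFlag=0$ node owned by a stuck enqueuer. When $q$ next starts an operation it reads this node at line~\ref{A2}, and its altruistic $\trynq$ finds $\fFlag = 0$ (skipping the block at line~\ref{C4}) and $\afterTail = \mathnull$, so its $\CAS$ at line~\ref{C8} succeeds and threads that node. This causes its owner's subsequent $\trynq$ to return $\vDone$ at line~\ref{C5.3}, i.e., completes the $\NQ$ operation of a process other than $q$ after $t_0$---the desired contradiction (and if that owner is $p$, it directly contradicts $p$ being stuck).

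The main obstacle is establishing the ``only $q$ threads'' conclusion cleanly and then showing that $q$ is genuinely \emph{forced} to read and thread a stuck enqueuer's live request. The delicate points are: (i) ruling out that a stuck process contributes infinitely many successful insertions through repeated altruistic attempts (handled by noting a stuck process executes its altruistic phase at most once); (ii) the bookkeeping that keeps a stuck enqueuer's announced node flagged $0$, which is exactly what prevents $q$'s altruistic help from silently becoming a no-op via the line~\ref{C4} branch; and (iii) the tail-caught-up invariant, which requires tracking all three $\CAS$es on $\gTail$ (lines~\ref{C5.1}, \ref{C10}, \ref{C15}) to guarantee that $q$'s line~\ref{C8} $\CAS$ cannot spuriously fail. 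The symmetric statement for $\DQ$ operations would follow by the same scheme applied to $\gAnnD$, $\gHead$, and the $\CAS$es in $\trydq$.
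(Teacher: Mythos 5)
Your overall strategy is the same as the paper's: use Theorem~\ref{NB-proof} to reduce to the case of exactly one infinitely-often completer $q$, show that $q$ eventually stops failing and hence stops writing $\gAnnE$, conclude that $\gAnnE$ eventually holds only announcements of processes whose operations never complete, and derive a contradiction from the fact that $q$ is then forced to help one of them. Your intermediate steps (eventually all successful line~\ref{C8} insertions are $q$'s; the tail is caught up when $q$ starts an operation; hence $q$ succeeds on first attempts and never reaches line~\ref{A13}) play the role of the paper's Claims~\ref{2NBNQ-claim1} and~\ref{2NBNQ-claim2}.

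There is, however, a genuine gap in your final step, caused by conflating ``process whose $\NQ$ operation never completes'' with ``process that takes infinitely many steps in $\NQ$.'' The model allows crash failures, so a process may write its node to $\gAnnE$ at line~\ref{A13} and then stop taking steps forever. Two things then break. First, your invariant that $\gAnnE$ ``permanently points to a $\fFlag=0$ node owned by a stuck enqueuer'' fails: $q$ may altruistically thread the crashed process's node --- and no contradiction arises from this, because the crashed owner never takes the step that would return $\vDone$ at line~\ref{C5.3} --- after which $\gAnnE$ points to a $\fFlag=1$ node and $q$'s later altruistic phases become no-ops via the line~\ref{C4} branch. Second, and for the same reason, your concluding sentence ``this causes its owner's subsequent $\trynq$ to return $\vDone$'' is unjustified when the announced node belongs to a crashed process: helping it completes no operation, so the desired contradiction is not obtained. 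The missing idea --- which the paper supplies in a single parenthetical remark --- is that each finite-step process writes $\gAnnE$ only finitely many times, while the truly stuck process $p$ (which takes infinitely many steps and fails forever) writes it infinitely often; hence there is a time after which every value appearing in $\gAnnE$ is the announcement of a process that takes infinitely many steps, and only from that point on does your help-and-complete argument yield the contradiction. With this one additional step, your proof is correct and essentially identical to the paper's.
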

	
	\begin{proof}
		Suppose, for contradiction, that the algorithm is not 2-nonblocking 
		for $\NQ$ operations.
		Thus there is an execution where a process takes infinitely many steps 
		but is stuck in an $\NQ$ operation forever, 
		and during which fewer than two processes complete 
		infinitely many $\NQ$ operations. 
		Since we have proved the algorithm is non-blocking for $\NQ$ operations, 
		the only possibility is that one process $p$ 
		completes infinitely many $\NQ$ operations. 
		Since other processes complete $\NQ$ operations 
		for only finitely many times,
		eventually (from some time $T$ onwards) 
		only $p$ can complete $\NQ$ operations. 
		We should notice that our algorithm has a help mechanism: 
		by \textit{only $p$ can complete operations} 
		we mean that only the requests of $p$ are fulfilled. 
		This does not necessarily means that process $p$ wins 
		every competition of setting $\gTail.\fNext$. 
		Other process can also win that competition, 
		but they can only succeed when they are helping $p$. 
		When process $p$ fails one such competition, 
		it will put its node in $\gAnnE$.
		We show this can only happen finitely many times.
		\newline
		
		\begin{claim}\label{2NBNQ-claim1}
			After time $T$, process $p$ can fail on line~\ref{C7} or \ref{C8} 
			for at most $ 2n $ times.
		\end{claim}
		
		\begin{proof}[Proof of Claim~\ref{2NBNQ-claim1}]
			If process $ p $ fails one such competition 
			or finds that tail is not pointing to the last node, 
			then at least one other process has succeeded in linking 
			one node to the queue. 
			However, we have assumed that after time $T$,
			only process $p$ can return from an $\NQ$ operation. 
			Every other process either gets stuck in a loop forever, 
			or is able to carry on $\DQ$ operation but never enqueues. 
			They can succeed the CAS of line~\ref{C8} only when 
			helping process $p$, or when helping itself 
			but crashing before returning. 
			Either case only happens at most once for one process, 
			so $p$ can fail no more than $2n$ times.
		\end{proof}
		
		\begin{claim}\label{2NBNQ-claim2}
			Eventually the value in register $ \gAnnE $ is never $p$.
		\end{claim}
		
		\begin{proof}[Proof of Claim~\ref{2NBNQ-claim2}]
			From Claim~\ref{2NBNQ-claim1}, we know that process $p$ 
			only sets $ \gAnnE $ to its node(s) for finitely many times, 
			because setting $ \gAnnE $ to a node of $p$ means 
			$p$ failed one competition on line~\ref{C7} or \ref{C8}, 
			which happens no more than $ 2n $ times. 
			However, by our assumption, some other process 
			is taking infinitely many steps in an $\NQ$ operation and never returns. 
			Thus it will set $ \gAnnE $ to its node infinitely many times. 
			As a result, eventually the register $ \gAnnE $ 
			points to a node that does not belong to $p$ 
			and is never rewritten by $p$.
		\end{proof}
		
		Now return to the proof of Theorem~\ref{2NBNQ-thm}.
		By Claim~\ref{2NBNQ-claim2}, eventually $ \gAnnE $ 
		never points to a node of $p$. 
		Since process $p$ completes infinitely many $\NQ$ operations, 
		after this time it will invoke an $\NQ$ operation 
		and see $\gAnnE$ contains a node from another process. 
		It will then work for that process on line~\ref{A5}.
		By Claim~\ref{2NBNQ-claim1}, eventually process $p$ never fails 
		the competitions on line~\ref{C7} or \ref{C8}, 
		so it succeeds in helping one other process. 
		Eventually the node in register $\gAnnE$ is from a process 
		that takes infinitely many steps, 
		(because if a node is written only finitely many times, 
		it will eventually be covered by those writing infinitely many times),
		and that process will complete its $\NQ$ operation 
		after being helped by process $p$, a contradiction. 
		In conclusion, the assumption that only process $p$ 
		can complete operations after time $ T $ is wrong, 
		so the algorithm is 2-nonblocking for the $\NQ$ function.
	\end{proof}
	
	\begin{theorem}\label{2NBDQ-thm}
		The algorithm is 2-nonblocking for the $\DQ$ operations.
	\end{theorem}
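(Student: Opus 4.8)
The plan is to mirror the proof of Theorem~\ref{2NBNQ-thm} on the dequeue side, replacing $\gAnnE$ by $\gAnnD$, the competition to link a node by the competition to CAS $\gHead$, and $\trynq$ by $\trydq$. Suppose for contradiction that the algorithm is not 2-nonblocking for $\DQ$. Then there is an execution in which some process is stuck in a $\DQ$ operation forever while fewer than two processes complete infinitely many $\DQ$ operations; since the algorithm is non-blocking for $\DQ$ by Theorem~\ref{NB-proof}, exactly one process $p$ completes infinitely many $\DQ$ operations, and there is a time $T$ after which only $p$'s dequeue requests are fulfilled. As in the enqueue case, this does not mean $p$ wins every CAS on $\gHead$: another process may win it, but only while helping $p$, or while dequeuing for itself and then crashing before completing its $\DQ$.

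First I would establish the dequeue analog of Claim~\ref{2NBNQ-claim1}: after time $T$, process $p$ fails its CAS on $\gHead$ (line~\ref{D15} or line~\ref{D24}) at most $2n$ times. Each such failure means that some $q \ne p$ succeeded a CAS on $\gHead$ since $p$'s read on line~\ref{D2}, i.e.\ $q$ effected a dequeue. Because only $p$'s requests are fulfilled after $T$, that success is either on behalf of $p$ (the single help $q$ performs in its altruistic phase, lines~\ref{B5}--\ref{B6}, before getting stuck) or on behalf of $q$ itself with $q$ crashing before completing that $\DQ$; any success that helped another live process, or that let $q$ return a value, would cause a process other than $p$ to complete a $\DQ$, contradicting the choice of $T$. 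Each case happens at most once per process, giving the bound $2n$. The analog of Claim~\ref{2NBNQ-claim2} then follows: since $p$ writes its reserved location into $\gAnnD$ (line~\ref{B14}) only after a failed $\trydq$, which by the previous paragraph occurs finitely often, whereas the process stuck in $\DQ$ forever writes its location into $\gAnnD$ infinitely often, eventually $\gAnnD$ permanently points to a location not belonging to $p$.

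To finish, I would argue as in Theorem~\ref{2NBNQ-thm}: after $\gAnnD$ permanently holds the location $\ell$ of some process $s$ that takes infinitely many steps (locations written only finitely often are eventually overwritten by those written infinitely often), $p$ on a later $\DQ$ reads $\ell$ on line~\ref{B2}, finds $^*\ell = \mathnull$ on line~\ref{B5} (since $s$ is still stuck, having initialized $\ell$ to $\mathnull$ on line~\ref{B10.1}), and calls $\trydq(\ell)$ on line~\ref{B6}; by the first claim $p$ eventually no longer fails, so this call succeeds and CASes $\gHead$ so that $\gHead.\add = \ell$ and $\gHead.\val$ holds the value dequeued for $s$. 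This is where the dequeue argument departs from the enqueue one, and is the step I expect to be the main obstacle: unlike enqueue helping, which links $s$'s node and sets its flag so that completion is immediate, dequeue helping does not itself write the value into $\ell$ --- the value is deposited into $\ell$ only by the next execution of line~\ref{D6} that reads this $\gHead$. I would close this gap by noting that $p$ performs infinitely many dequeues, so its very next $\trydq$ reads $\gHead$ on line~\ref{D2} and writes $\gHead.\val$ into $\ell$ on line~\ref{D6}; then $s$, still looping in its selfish phase on the same reserved location $\ell$, sees $^*\ell \ne \mathnull$ on line~\ref{D9} and returns on line~\ref{D10}, so $s$ completes a $\DQ$ after $T$, contradicting the choice of $T$. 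This contradiction shows the algorithm is 2-nonblocking for $\DQ$.
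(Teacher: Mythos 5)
Your proposal is correct and takes essentially the same route as the paper's proof: invoke Theorem~\ref{NB-proof} to isolate the single process $p$ completing infinitely many $\DQ$ operations, bound $p$'s CAS failures on lines~\ref{D15}/\ref{D24} after time $T$ by $2n$, conclude that $\gAnnD$ eventually holds only the location of a process stuck in $\DQ$, and derive a contradiction from $p$ helping that process on line~\ref{B6}. The only difference is that your closing step --- noting that the helped value reaches the stuck process's location $\ell$ only through a subsequent execution of line~\ref{D6}, after which that process sees a non-$\mathnull$ value on line~\ref{D9} and returns on line~\ref{D10} --- makes explicit a detail that the paper compresses into the single sentence that $p$ ``successfully dequeues a node (or $\emptyq$) for it.''
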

	
	\begin{proof}
		The proof is similar to that of Theorem~\ref{2NBNQ-thm}. 
		Eventually (from some time $ T $ onwards) only a process $ p $ 
		can complete infinitely many $\DQ$ operations, 
		and every other process is either stuck in a loop, 
		or never dequeues. 
		After time $ T $, process $ p $ can fail the CAS operations
		on lines~\ref{D15} or \ref{D24} for at most $ 2n $ times. 
		This is because that when process $ p $ fails one such CAS operation, 
		at least one other process succeeds to dequeue a node (or $ \emptyq $). 
		These events can happen at most twice for one process, 
		so the total number of times is no more than $ 2n$. 
		Eventually $ \gAnnD $ never stores a pointer allocated by $ p$, 
		because $ p $ never writes into it, 
		whereas some other process writes into it infinitely many times. 
		As a result, process $ p $ works for that process on line~\ref{B6}
		and successfully dequeues a node (or $\emptyq$) for it.
	\end{proof}
	
	From Theorems~\ref{2NBNQ-thm} and \ref{2NBDQ-thm}, we have:
	
	\begin{corollary}\label{2NBD-thm}
		The algorithm is \DNB{2}.
	\end{corollary}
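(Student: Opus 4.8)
The plan is to observe that the corollary is an immediate consequence of the two preceding theorems, once the definition of \DNB{2} is unfolded for the specific case of a queue. First I would note that a queue supports exactly two operation types, \NQ and \DQ, so verifying the \DNB{2} property amounts to checking the differentiated-nonblocking condition separately for each of these two types.

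Next I would take each type in turn. Suppose some process invokes an \NQ operation and takes infinitely many steps without completing it. Then Theorem~\ref{2NBNQ-thm} (2-nonblocking for \NQ) guarantees that at least two \emph{other} processes complete infinitely many \NQ operations; in particular, at least two processes complete infinitely many operations of type \NQ, which is exactly what the definition of \DNB{2} demands for this type. The symmetric argument, using Theorem~\ref{2NBDQ-thm}, handles the case of a process stuck forever in a \DQ operation. Since these are the only two operation types supported by the queue, the \DNB{2} condition holds for every operation type, and the corollary follows.

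The only point that needs a moment's care is reconciling the wording of Definition~\ref{2NB-def}, which speaks of ``at least two \emph{other} processes,'' with the definition of \DNB{2}, which asks for ``at least $k=2$ processes'': since the two other processes are distinct from the stuck process and from each other, they certainly witness two processes completing infinitely many operations of the given type, so the counts align. There is no real obstacle here, as the corollary is essentially a one-line appeal to the definition of \DNB{2}, obtained by conjoining Theorems~\ref{2NBNQ-thm} and~\ref{2NBDQ-thm} and instantiating the definition at each of the queue's two operation types.
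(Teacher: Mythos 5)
Your proposal is correct and matches the paper's approach exactly: the paper derives the corollary as an immediate consequence of Theorems~\ref{2NBNQ-thm} and~\ref{2NBDQ-thm}, since a queue has only the two operation types $\NQ$ and $\DQ$, which is precisely your argument. Your extra remark that ``two \emph{other} processes'' is stronger than (and hence implies) the ``at least $k=2$ processes'' required by the \DNB{2} definition is a sound observation that the paper leaves implicit.
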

	
	\newpage
	\section{Experimental results}
	\subsection{Experimental setting}
	We implemented two shared queue algorithms:
	our \DNB{2} algorithm (Algorithm 1), and
	the non-blocking MS algorithm as described in~\cite{MichaelScott96}.
	The two algorithms were implemented ``as is'', without any optimization:
	our goal was
	to get a rough idea of the $\DNB{2}$ algorithm's potential compared
	to a commonly used non-blocking algorithm.
	In particular, we wanted to explore the tradeoff between the cost and benefits
	of
	the non-blocking property of the MS algorithm (which does not need any help mechanism)
	compared to
	the stronger 2+2 non-blocking property of \DNB{2}
	(which uses a light-weight help mechanism).
	
	We implemented both algorithms in Java, and executed them on an
	Intel i7-9750H CPU with 6 cores (12 threads) and 16 GBs of RAM.
	These are only preliminary results as we intend to re-run more extensive experiments
	on a dedicated multiprocessor cluster in the future.
	The experimental results that we obtained so far indicate that,
	compared to the MS algorithm, the fairness of the $\DNB{2}$ algorithm is much higher
	across a wide range of process speeds,
	albeit at the cost of a marginal reduction in throughput.
	
	To evaluate the fairness of the two algorithms under various process speeds,
	we had to control the speed of individual processes.
	To control the speed of a process $p$, we added a delay immediately after every \emph{shared memory access}
	(read, write or CAS) by $p$;\footnote{We did not add delays after local steps,
		e.g., after reading a local variable,
		because the time to perform a local step is negligible compared to the time
		it takes to access shared memory.}
	In our experiments, these delays follow an exponential distribution $\ex(\mu)$,
	where $\mu$ is the average delay.
	So the number of shared-memory steps that $p$ executes by some time $t$ follows a
	Poisson distribution with parameter $\lambda = 1 / \mu$,
	and the simulated speed of a process $p$ is (proportional) to $1/\mu$.\footnote{This is also because
		the smallest delay $\mu$ that we chose in our experiments, 1ms, is orders of magnitude greater
		than the time that a process actually takes to execute any line of code.}
	Thus, in our experiments we controlled the speed of each process by setting its corresponding average delay $\mu$.
	

	\subsection{Experiments with two enqueuers and two dequeuers}
	
	One of our experiments considered a system
	with only two enqueuers and two dequeuers 
	where we slowed one of the two enqueuers and one of the two dequeuers  by a factor of $k$,
	for each $k$ in the range $2 \le k < 20$.
	To do so, we set the delay parameter $\mu$
	of the slower enqueuer [dequeuer] to be $k$ times
	as large as the delay parameter $\mu$ of the faster enqueuer [dequeuer].
	
	\begin{figure}[h]
		\centering
		\begin{minipage}[t]{0.5\linewidth}
			\centering
			\includegraphics[width=7cm,height=5.5cm]{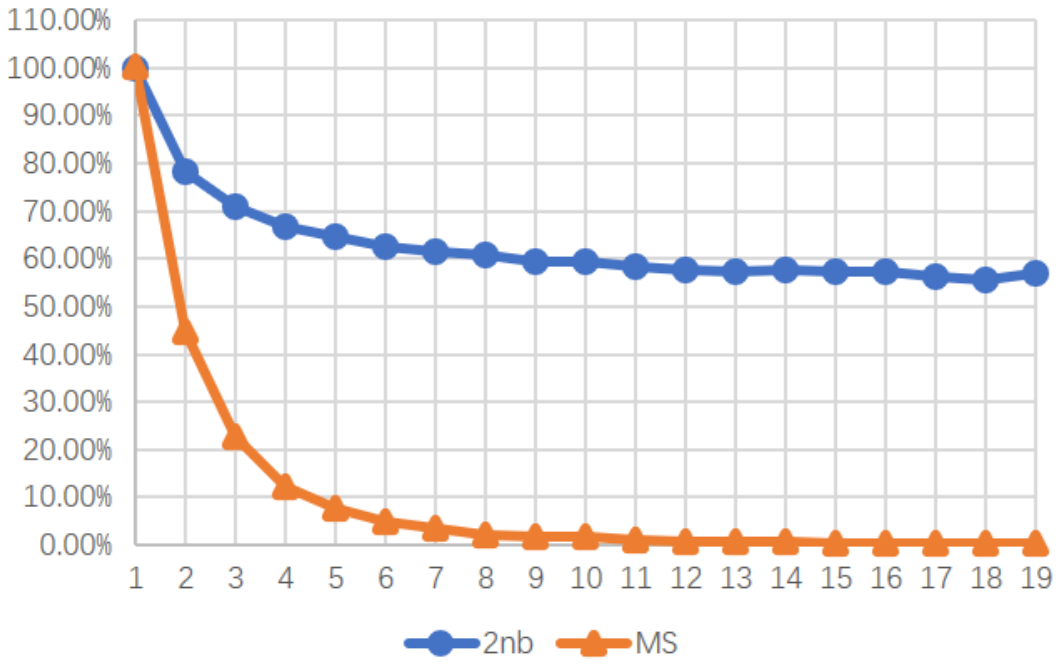}
			\caption{\small{2 enqueuers + 2 dequeuers \\\hspace*{8mm}$y$-axis: \% of fair share obtained by slow enqueuer \\
					\hspace*{8mm}$x$-axis: slowdown factor of slow enqueuer}}
			\label{2+2nq}
		\end{minipage}%
		\begin{minipage}[t]{0.5\linewidth}
			\hspace{2mm}
			\includegraphics[width=7cm,height=5.5cm]{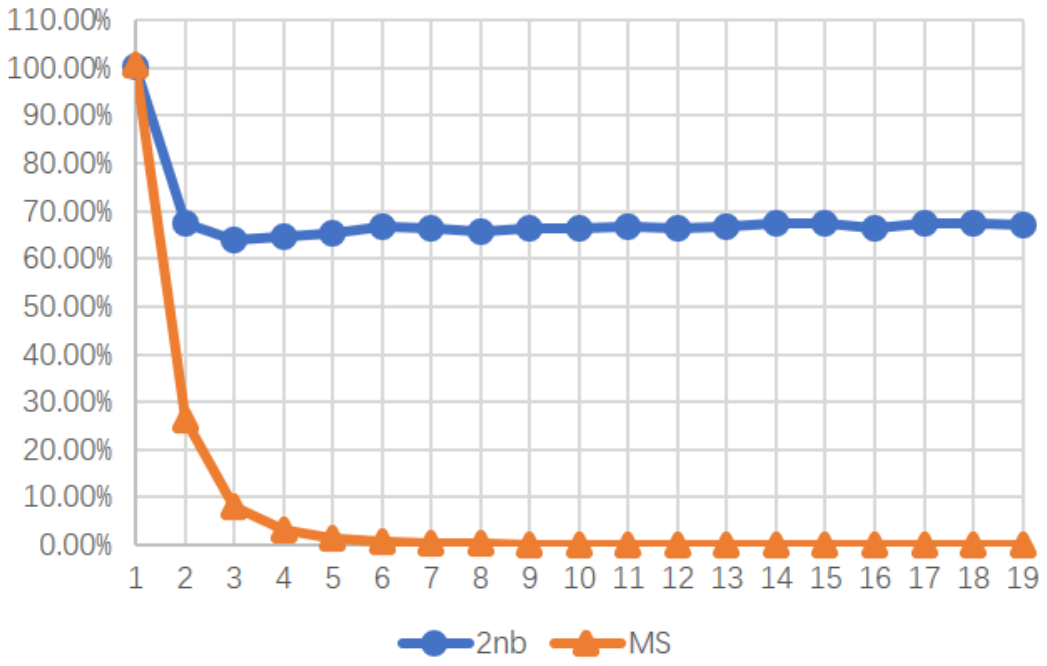}
			\caption{\small{2 enqueuers + 2 dequeuers \\\hspace*{8mm}$y$-axis: \% of fair share obtained by slow dequeuer \\
					\hspace*{8mm}$x$-axis: slowdown factor of slow dequeuer}}
			\label{2+2dq}
		\end{minipage}%
	\end{figure}
	
	With the MS algorithm, we observed that the slower enqueuer and the slower dequeuer
	completed much less than their fair shares of operations as $k$ increased (see Figures~\ref{2+2nq} and~\ref{2+2dq}).
	And when we reached $k=11$, the slow dequeuer was not able to complete \emph{any} dequeue operation,
	while the other dequeuer completed about 49,000 operations;
	and the slow enqueuer completed only 43 operations,
	while the other enqueuer completed almost 59,000 operations.

	In contrast, the fairness of the $\DNB{2}$ algorithm remained reasonable (55\% or more)
	for both enqueuers and dequeuers,
	and for all the values of $k$ in the range $2 \le k < 20$.
	
	The throughput of the $\DNB{2}$ algorithm, however,
	was lower than that of the MS algorithm:
	for $k$ in the range $2 \le k < 20$, 
	the enqueuers' throughput of the $\DNB{2}$ algorithm was about 67\% to 62\%
	of the throughput of the MS algorithm, while the dequeuers' throughput was about  88\% to 76\% of the MS algorithm.
	
	\subsection{Experiments with eight enqueuers and eight dequeuers}
	
	The experimental results that we observed for a system with two enqueuers and two dequeuers
	were very encouraging,
	but since a $\DNB{2}$ queue is, by definition, wait-free for this special case,
	one may wonder whether the good fairness behaviour of our algorithm
	would continue to hold in systems with more processes.
	To check this, we considered a system with 8 enqueuers and 8 dequeuers.
	
	As a baseline, we first executed the two algorithms in a setting ${\mathcal{S}}_0$
	where all 16 processes have exactly the same speed (every process has $k=1$).
	These runs confirmed that, as expected, both algorithms are indeed fair in this case:
	every process got very close to 100\% of its fair share.\footnote{In each of the two groups (of 8 enqueuers and 8 dequeuers)
		each process completed about 12.5\% of the total number of operations completed by its group.}
	
	We then considered what happens when, in each group, processes have different speeds.
	To do so, we experimented with two different settings:
	
	\begin{itemize}
		\item a setting ${\mathcal{S}}_1$ where the differences in speeds are relatively small;
		specifically, process $i$, for $1 \le i \le 8$,
		is slowed down by a factor of $i$ (e.g., the speeds of processes 2, 3 and 4, are 1/2, 1/3 and 1/4 
		of the speed of process 1, respectively).
		\item a setting ${\mathcal{S}}_2$ where the differences in speeds are large;
		specifically, process $i$, for $1 \le i \le 8$,
		is slowed down by a factor of $2^{i-1}$ (e.g., the speeds of processes 2, 3 and 4, are 1/2, 1/4 and 1/8
		of the speed of process 1, respectively).
	\end{itemize}
	
	\begin{figure}[h]
		\centering
		\begin{minipage}[t]{0.5\linewidth}
			\centering
			\includegraphics[width=7cm,height=5.5cm]{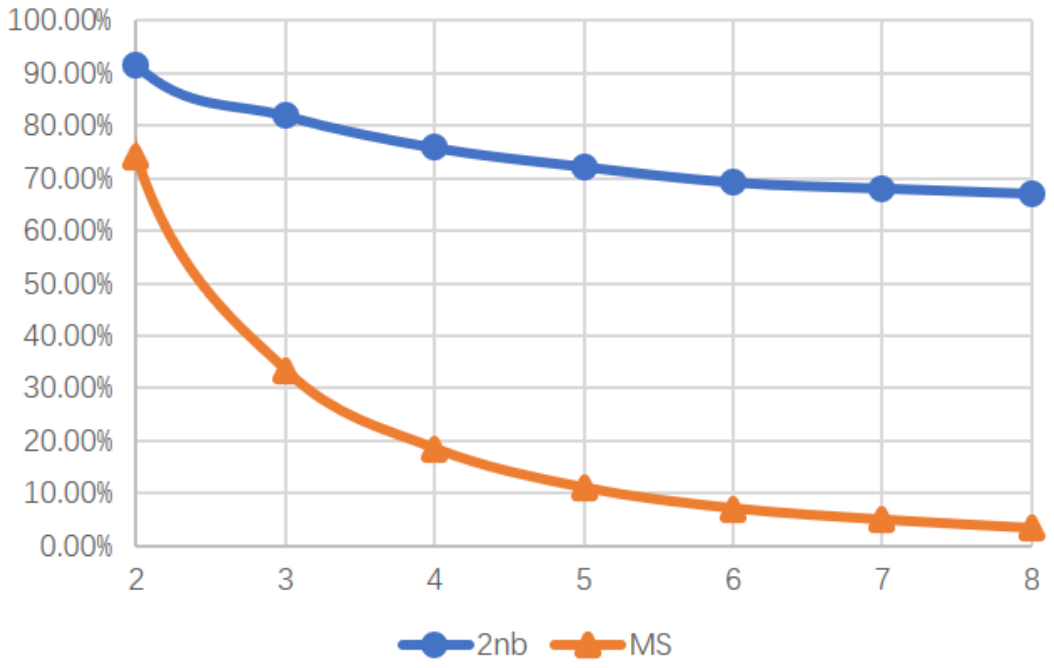}
			\caption{\small{8 enqueuers + 8 dequeuers \\\hspace*{8mm}$y$-axis: \% of fair share obtained by enqueuer $i$ \\
					\hspace*{8mm}$x$-axis: enqueuer $i$, which is slowed by factor of $i$}}
			\label{1234nq}
		\end{minipage}%
		\begin{minipage}[t]{0.5\linewidth}
			\hspace{2mm}
			\includegraphics[width=7cm,height=5.5cm]{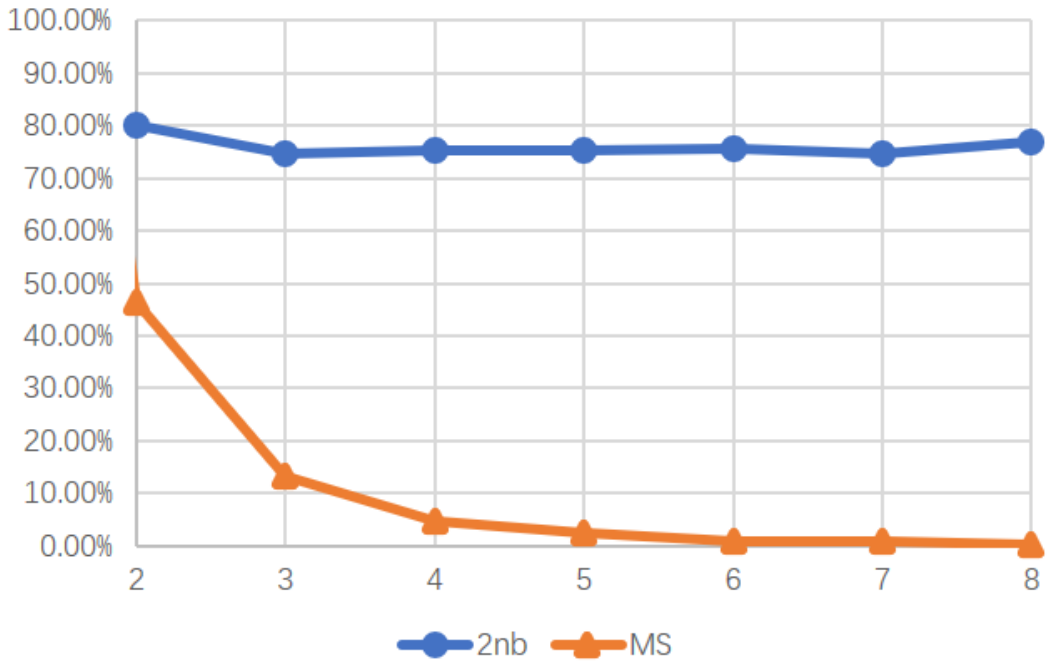}
			\caption{\small{8 enqueuers + 8 dequeuers \\\hspace*{8mm}$y$-axis: \% of fair share obtained by dequeuer $i$ \\
					\hspace*{8mm}$x$-axis: dequeuer $i$, which is slowed by factor of $i$}}
			\label{1234dq}
		\end{minipage}%
	\end{figure}
	
	\begin{figure}[h]
		\centering
		\begin{minipage}[t]{0.5\linewidth}
			\centering
			\includegraphics[width=7cm,height=5.5cm]{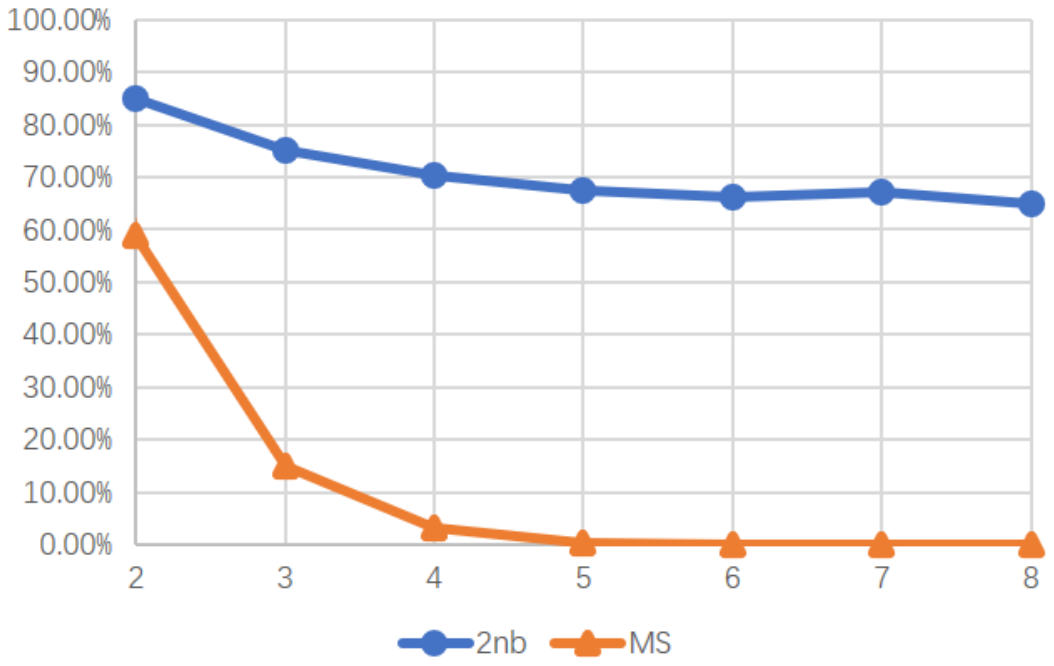}
			\caption{\small{8 enqueuers + 8 dequeuers \\\hspace*{8mm}$y$-axis: \% of fair share obtained by enqueuer $i$ \\
					\hspace*{8mm}$x$-axis: enqueuer $i$, which is slowed by factor of $2^{i-1}$}}
			\label{1248nq}
		\end{minipage}%
		\begin{minipage}[t]{0.5\linewidth}
			\hspace{2mm}
			\includegraphics[width=7cm,height=5.5cm]{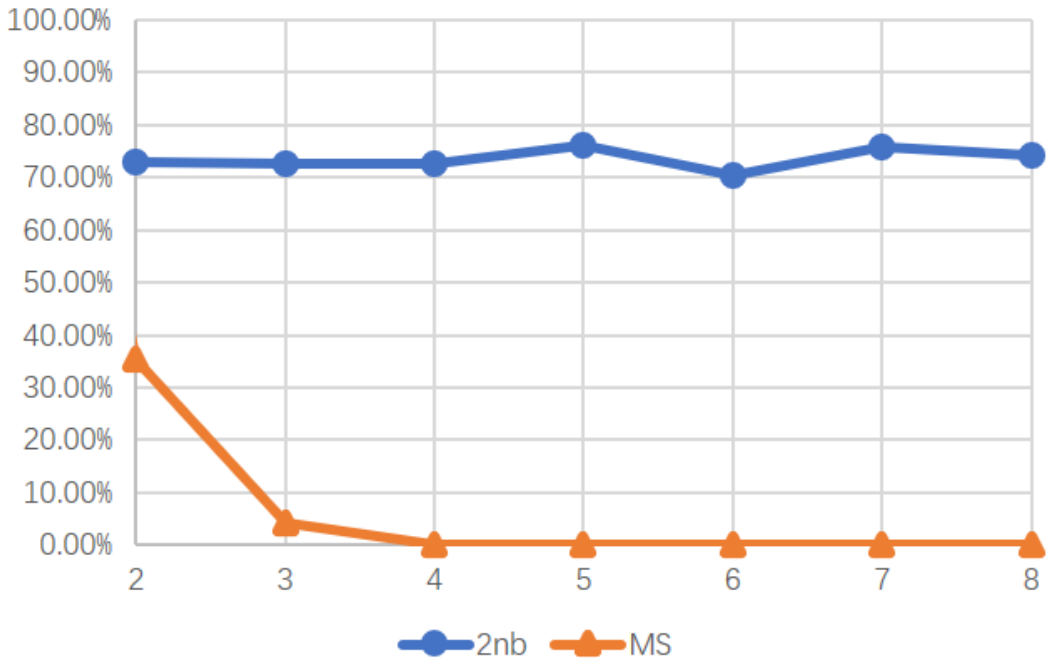}
			\caption{\small{8 enqueuers + 8 dequeuers \\\hspace*{8mm}$y$-axis: \% of fair share obtained by dequeuer $i$ \\
					\hspace*{8mm}$x$-axis: dequeuer $i$, which is slowed by factor of $2^{i-1}$}}		
			\label{1248dq}
		\end{minipage}%
	\end{figure}
	
	In both settings, we observed that the fairness of the \DNB{2} algorithm is even better than in the system with two enqueuers and two dequeuers:
	in ${\mathcal{S}}_1$, even the enqueuer and dequeuer that were 8 times slower than the fastest enqueuer and dequeuer
	managed to complete 67\% and 77\% of their fair shares of operations, respectively (Figures~\ref{1234nq} and~\ref{1234dq});
	in ${\mathcal{S}}_2$, even the enqueuer and dequeuer that were 128 times slower (!)~than the fastest enqueuer and dequeuer
	completed 65\% and 76\% of their fair shares of operations, respectively (Figures~\ref{1248nq} and~\ref{1248dq}).

	In sharp contrast, we saw that the fairness of MS algorithm declines rapidly when the differences in speeds increase. 
	As shown in Figures~\ref{1234nq} and~\ref{1234dq},
	enqueuers with a slowdown factor of 2, 3, 5 and 8 completed only 74\%, 33\%, 11\% and 3\% of their fair shares of enqueue operations, respectively;
	and
	dequeuers with a slowdown factor of 2, 3, 5 and 8 completed only 45\%, 13\%, 3\% and 0\%\footnote{Dequeuer 8 managed to complete only 5 dequeue operations, while the group of dequeuers completed about 50000 dequeues.}
	of their fair shares of dequeue operations, respectively.
	As shown in Figures~\ref{1248nq} and~\ref{1248dq},
	enqueuers (dequeuers) with a slowdown factor of
	2, 4, 8 and 16 completed only 59\%, 15\%, 3\% and 0\% (35\%, 4\%, 0\% and 0\%)
	of their fair shares of enqueue (dequeue) operations, respectively.

	We also compared the throughput of the two algorithms, in terms of the total number of completed operations.
	As we can see in Table~\ref{throughputs}, the throughput of the MS algorithm was marginally higher,
	but by less than in the experiment with two enqueuers and two dequeuers:
	In setting ${\mathcal{S}}_0$,
	the 16 processes completed a total of
	223089 and 206605 operations with the MS algorithm and 
	the $\DNB{2}$ algorithm, respectively; so the throughput of the $\DNB{2}$ algorithm was 92.6\% of the MS algorithm.
	In  setting ${\mathcal{S}}_1$,
	they completed a total of
	112147 and 100490 operations with the MS algorithm and 
	the $\DNB{2}$ algorithm, respectively.
	In  setting ${\mathcal{S}}_2$,
	they completed a total of
	111710 and 93357 operations with the MS algorithm and 
	the $\DNB{2}$ algorithm, respectively.
	So in  ${\mathcal{S}}_1$ and ${\mathcal{S}}_2$,
	the throughput of the $\DNB{2}$ algorithm was 89.6\% and 83.6\% of the MS algorithm, respectively.

	\begin{table}[h]
		\vspace{5mm}
		\begin{center}
			\begin{tabular}{ |r|r|c|c|r|r| }
				\cline{3-6}
				\multicolumn{2}{c|}{} & NQ throughput & DQ throughput & Total & \DNB{2}/MS \\
				\hline
				\multirow{2}{*}{$ {\mathcal{S}}_0 $} & MS & 128167 & 94922 & 223089 & \multirow{2}{*}{ 92.61\% }\\\cline{2-5}
				& \DNB{2} & 93177 & 113428 & 206605 & \\
				\hline
				\multirow{2}{*}{$ {\mathcal{S}}_1 $} & MS & 61522 & 50625 & 112147 & \multirow{2}{*}{ 89.61\% }\\\cline{2-5}
				& \DNB{2} & 44445 & 56045 & 100490 & \\
				\hline
				\multirow{2}{*}{$ {\mathcal{S}}_2 $} & MS & 61003 & 50707 & 111710 & \multirow{2}{*}{ 83.57\% }\\\cline{2-5}
				& \DNB{2} & 42856 & 50501 & 93357 & \\
				\hline
			\end{tabular}
		\end{center}
		\caption{Throughputs of the MS and \DNB{2} algorithms in settings $ {\mathcal{S}}_0,~{\mathcal{S}}_1,~{\mathcal{S}}_2 $}
		\label{throughputs}
	\end{table}

	\vspace{2cm}

	\bibliography{biblio}

\begin{thebibliography}{1}

\bibitem{AADGMS_93}
Yehuda Afek, Hagit Attiya, Danny Dolev, Eli Gafni, Mike Merritt, and Nir
  Shavit.
\newblock Atomic snapshots of shared memory.
\newblock {\em Journal of the ACM}, 40(4):873--890, 1993.

\bibitem{alistarh2016lock}
Dan Alistarh, Keren Censor-Hillel, and Nir Shavit.
\newblock Are lock-free concurrent algorithms practically wait-free?
\newblock {\em Journal of the ACM (JACM)}, 63(4):1--20, 2016.

\bibitem{alistarh2015lock}
Dan Alistarh, Thomas Sauerwald, and Milan Vojnovi{\'c}.
\newblock Lock-free algorithms under stochastic schedulers.
\newblock In {\em Proceedings of the 2015 ACM Symposium on Principles of
  Distributed Computing}, pages 251--260, 2015.

\bibitem{Ben-DavidCHT16}
Naama Ben{-}David, David Yu~Cheng Chan, Vassos Hadzilacos, and Sam Toueg.
\newblock k-abortable objects: Progress under high contention.
\newblock In Cyril Gavoille and David Ilcinkas, editors, {\em Distributed
  Computing - 30th International Symposium, {DISC} 2016, Paris, France,
  September 27-29, 2016. Proceedings}, volume 9888 of {\em Lecture Notes in
  Computer Science}, pages 298--312. Springer, 2016.

\bibitem{BushkovGuerraoui15}
Victor Bushkov and Rachid Guerraoui.
\newblock Safety-liveness exclusion in distributed computing.
\newblock In {\em Proceedings of the twenty fourth annual ACM symposium on
  Principles of Distributed Computing}. ACM, 2015.

\bibitem{Herlihy91}
Maurice Herlihy.
\newblock Wait-free synchronization.
\newblock {\em ACM Transactions on Programming Languages and Systems (TOPLAS)},
  13(1):124--149, 1991.

\bibitem{HerlihyS11}
Maurice Herlihy and Nir Shavit.
\newblock On the nature of progress.
\newblock In Antonio~Fern{\'{a}}ndez Anta, Giuseppe Lipari, and Matthieu Roy,
  editors, {\em Principles of Distributed Systems - 15th International
  Conference, {OPODIS} 2011, Toulouse, France, December 13-16, 2011.
  Proceedings}, volume 7109 of {\em Lecture Notes in Computer Science}, pages
  313--328. Springer, 2011.

\bibitem{HerlihyWing90}
Maurice~P Herlihy and Jeannette~M Wing.
\newblock Linearizability: A correctness condition for concurrent objects.
\newblock {\em ACM Transactions on Programming Languages and Systems (TOPLAS)},
  12(3):463--492, 1990.

\bibitem{MichaelScott96}
Maged~M Michael and Michael~L Scott.
\newblock Simple, fast, and practical non-blocking and blocking concurrent
  queue algorithms.
\newblock In {\em Proceedings of the fifteenth annual ACM symposium on
  Principles of distributed computing}, pages 267--275. ACM, 1996.

\end{thebibliography}

	\appendix
	
	\section{Proof of linearizability of the $\DNB{2}$ queue algorithm}
	\label{asec:appendix-algo-proof}
	
	Recall that $H$ is any history of the algorithm,
	and we constructed a completion $H'$ of $H$ as follows:
	\begin{itemize}
		\item For each incomplete $\NQ$ operation in $H$, 
		if it allocates a new node on line~\ref{A8} 
		and this node has been pointed to by the $\gTail$ pointer,
		then it is completed in $H'$ by returning $\vDone$;
		otherwise, it is removed from $H'$.
		
		\item For each incomplete $\DQ$ operation in $H$,
		if it allocates a new location on line~\ref{B10}
		and this location has been pointed to by $\gHead.\add$,
		then it is completed in $H'$ by returning the value of $\gHead.\val$
		at the earliest time when $\gHead.\add$ 
		points to the new location it allocated;
		otherwise, it is removed from $H'$. 
	\end{itemize}
	
	We then constructed a linearization $L$ of $H'$ as follows:
	\begin{itemize}
		\item Each $\NQ$ operation in $H'$ is linearized
		when $\gTail$ first points to the new node it allocates on line~\ref{A8}.
		
		\item Each $\DQ$ operation is linearized
		when $\gHead.\add$ first points to the new location 
		it allocates in line~\ref{B10}
		if the return value of the $\DQ$ is not $\emptyq$;
		otherwise it is linearized
		when the first process that sets $\gHead.\add$ 
		to point to the new location
		last reads $\gTail$ in line~\ref{D3}.
	\end{itemize}
	
	Note that by this construction, no operation can be linearized before it is invoked.
	This is easy to see for $\NQ$ operations 
	and $\DQ$ operations that do not return $\emptyq$,
	because these operations are linearized when
	their newly allocated node or location
	is pointed to by $\gTail$ or $\gHead.\add$.
	For each $\DQ$ operation that returns $\emptyq$,
	the process that first sets $\gHead.\add$ 
	to point to the new location on line~\ref{D15}
	must have previously read a pointer to that new location
	on line~\ref{B2} before executing line~\ref{D3},
	which is the linearization point of the operation.
	Clearly, a pointer to the new location cannot be created 
	before the location is allocated,
	so the operation is linearized after it is invoked.
	
	Thus we observe that:
	\begin{observation}\label{well-defined-L}
		The linearization points of the linearization $L$ of $H'$
		are well defined and within their execution intervals
		if the following properties hold:
		\begin{itemize}
			\item For each $\NQ$ operation,
			a pointer to the new node it allocates on line~\ref{A8}
			can be written into $\gTail$ at most once,
			and has occurred exactly once when the operation completes.
			
			\item For each $\DQ$ operation, 
			a pointer to the new location it allocates on line~\ref{B10}
			can be written into $\gHead$ at most once, 
			and has occurred exactly once when the operation completes. 
		\end{itemize}
	\end{observation}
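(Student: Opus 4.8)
The plan is to verify, \emph{assuming} the two write-once properties in the hypothesis, that each operation in $H'$ has a unique linearization point and that this point lies between the operation's invocation and its response. I would treat the three kinds of operations separately: $\NQ$ operations, $\DQ$ operations returning a value other than $\emptyq$, and $\DQ$ operations returning $\emptyq$. For the first two kinds the linearization point \emph{is} the (unique) write of the freshly allocated pointer into $\gTail$ or into $\gHead.\add$, so the two properties almost immediately do the work; the third kind is where the real argument lies, since there the linearization point is a \emph{read} of $\gTail$ rather than a write constrained by the hypothesis. Throughout, I would inherit the ``after the invocation'' direction from the remark preceding the observation, and concentrate on well-definedness and the ``at or before the response'' direction.

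First I would dispatch $\NQ$. The ``at most once'' half of the $\NQ$ property guarantees that the pointer to the node allocated on line~\ref{A8} is written into $\gTail$ at most once, so ``the first time $\gTail$ points to this node'' names a unique instant and the linearization point is well defined. For existence and the before-the-response bound I would split into operations already complete in $H$ --- for which the ``has occurred exactly once when the operation completes'' half places the write at or before the response --- and operations completed in $H'$, which by the construction of $H'$ are retained \emph{precisely because} their node was pointed to by $\gTail$, so the write exists and the added response is placed after it. The after-invocation direction is immediate, since the node cannot be pointed to before it is allocated on line~\ref{A8}. The $\DQ$ operations returning a value $\ne\emptyq$ are handled identically, replacing $\gTail$ and line~\ref{A8} by $\gHead.\add$ and line~\ref{B10} and invoking the $\DQ$ property.

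The crux is the $\DQ$ operations returning $\emptyq$, whose linearization point is the last read of $\gTail$ on line~\ref{D3} by the \emph{first} process to set $\gHead.\add$ to the newly allocated location. Here I would first use the ``at most once'' half of the $\DQ$ property to conclude that there is a \emph{single} write setting $\gHead.\add$ to that location, performed by one process in one $\trydq$ call; within that call line~\ref{D3} runs exactly once, so the read named as the linearization point is unique. I would then argue that, because the return value is $\emptyq$, this write must be the successful $\gHead.\CAS$ of line~\ref{D15} (the empty branch, the only one that sets $\gHead.\val=\emptyq$), and that its preceding line~\ref{D3} read is strictly earlier. Since the operation can return $\emptyq$ only after $\gHead.\add$ has been set this way --- either the setter is the dequeuer itself, returning on line~\ref{D16}, or a helper set it and $\emptyq$ later propagated to the location $\ladd$ on line~\ref{D6} and was observed on lines~\ref{D9}--\ref{D10} --- the linearization point precedes the response; and since a pointer to the location cannot be held before it is allocated on line~\ref{B10}, it follows the invocation.

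The main obstacle will be exactly this last case: the linearization point is not the event the hypothesis directly constrains but a read performed by a possibly \emph{different} (helper) process, so the work is to trace the control flow --- from the setter's line~\ref{D3} read, through its successful line~\ref{D15} \CAS, to the moment the original dequeuer actually observes and returns $\emptyq$ --- in order to sandwich that read inside the dequeuer's own execution interval. Everything else reduces cleanly to the two write-once properties (for uniqueness and the before-response bound) together with the already-established fact that a freshly allocated node or location cannot be referenced before it is allocated (for the after-invocation bound).
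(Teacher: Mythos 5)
Your proposal is correct and matches the paper's treatment: the paper states this as an Observation whose only nontrivial content --- the after-invocation direction, in particular for $\DQ$ operations returning $\emptyq$ via the setter's line~\ref{B2} read and the fact that a pointer to the location cannot exist before line~\ref{B10} --- is argued in the paragraph immediately preceding it, exactly as you inherit it, while the uniqueness and before-response directions follow routinely from the two write-once hypotheses as you verify case by case. Your expansion (including the observation that a return value of $\emptyq$ forces the write to be the line~\ref{D15} CAS, and that line~\ref{D3} executes once per $\trydq$ call) just fills in details the paper leaves implicit.
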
 
	
	\begin{lemma}\label{simple-prop}
		The following properties always hold:
		\begin{enumerate}
			\item The linked list starting from $\gINode$ 
			is never disconnected.\label{gINode-connected-list}
			
			\item $\gHead.\ptr$ always points to a node in the linked list
			starting from $\gINode$.\label{gINode-head-in-list}
			Furthermore, if $\gHead.\ptr$ is changed from pointing to a node $A$ 
			to pointing to a node $B$,
			then the $\textit{next}$ pointer of $A$ points to $B$.
			\label{gINode-delete-from-head}
			\label{gINode-head-move}
			
			\item $\gTail$ always points to a node in the linked list
			starting from $\gINode$ that is not before 
			the node pointed to by $\gHead.\ptr$.
			\label{gINode-tail-in-list-after-head}
			Furthermore, if $\gTail$ is changed from pointing to a node $A$ 
			to pointing to a node $B$,
			then the $\textit{next}$ pointer of $A$ points to $B$. 
			\label{gINode-tail-move} 
			
			\item Whenever a node $A$ is linked to a node $B$ 
			(in the sense that the $\textit{next}$ pointer 
			of $A$ is set to point to $B$),
			$A$ is the last node in the linked list
			starting from $\gINode$
			and $\gTail$ points to $A$.
			\label{gINode-add-to-tail} 
		\end{enumerate}
	\end{lemma}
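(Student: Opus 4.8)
The plan is to prove all four properties \emph{simultaneously} by induction on the steps of an arbitrary execution. The base case is immediate from the initialization: the list is the single node $\gINode$ with $\gINode\rightarrow\fNext=\mathnull$, and both $\gTail$ and $\gHead.\ptr$ point to it. For the inductive step I would note that the only steps capable of affecting any of the four properties are those that write a node's $\fNext$ field, or apply a $\CAS$ to $\gTail$ or to $\gHead$. The foundational observation, which I would establish first, is that $\fNext$ fields are \emph{monotone}: the $\fNext$ of a node already in the list is modified only by the $\CAS(\mathnull,\lNode)$ on line~\ref{C8}, whose expected value is $\mathnull$, so it can succeed only while the field is still $\mathnull$; once set to a node, it is never changed again. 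Hence the list can only grow, and only by attaching a node where $\fNext$ was previously $\mathnull$.

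Granting monotonicity, the linchpin is Property~\ref{gINode-add-to-tail}, which localizes \emph{where} a new link is created. When the $\CAS$ on line~\ref{C8} links $A$ to $B$, the variable $\lTail=A$ was read from $\gTail$ (line~\ref{C2} or~\ref{C2repeat}), and success of the $\CAS$ forces $A\rightarrow\fNext=\mathnull$ at that instant, so $A$ has no successor and is the last node. Using the inductive ``furthermore'' clause of Property~\ref{gINode-tail-in-list-after-head} together with monotonicity, $\gTail$ advances only along genuine $\fNext$ links: it cannot have moved past $A$ (there is nothing after $A$), and since $A$ was the value it held when $\lTail$ was read and $\gTail$ never moves backward, it cannot be behind $A$ either; thus $\gTail$ still points to $A$. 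With Property~\ref{gINode-add-to-tail} established, the ``furthermore'' clauses of Properties~\ref{gINode-head-in-list} and~\ref{gINode-tail-in-list-after-head} follow by inspecting each pointer-moving $\CAS$: every $\gTail.\CAS$ (lines~\ref{C5.1},~\ref{C10},~\ref{C15}) and the $\gHead.\CAS$ on line~\ref{D24} move the pointer from a node $A$ to $A\rightarrow\fNext$, a successor that was read earlier and, by monotonicity, is unchanged. For the clause ``$\gTail$ is not before $\gHead.\ptr$'', I would observe that $\gHead.\ptr$ advances (line~\ref{D24}) only in the branch where $\lHead.\ptr\ne\lTail$ (line~\ref{D13} false), i.e.\ when the queue was read as non-empty; the inductive hypothesis then places the new head node at or before $\lTail$, and since $\gTail$ only moves forward it lies at or beyond that node when the $\CAS$ commits. (The $\gHead.\CAS$ on line~\ref{D15} leaves $\gHead.\ptr$ unchanged, so it is harmless here.)

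The step I expect to be the main obstacle is Property~\ref{gINode-connected-list}, specifically ruling out the creation of a \emph{cycle}: I must show that no node is ever linked into the list a second time (otherwise line~\ref{C8} could make a node point to an earlier node, or even to itself). This is where the $\fFlag$ field is essential, and I would carry two auxiliary invariants through the same induction: that every node to which $\gTail$ points has $\fFlag=1$, and that every in-list node that is \emph{not} the last one has $\fFlag=1$. The first holds because a node's flag is set to $1$ (line~\ref{C9},~\ref{C14}, or~\ref{C5}) before any $\gTail.\CAS$ advances $\gTail$ to it, and flags are monotone (changed only $0\to1$); the second then follows from Property~\ref{gINode-add-to-tail}, since a node acquires a successor only while $\gTail$ points to it. Given these, suppose line~\ref{C8} of some process were to append a node $X$ already in the list, and let $\lTail$ be the value that process read from $\gTail$. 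Since insertions occur only at the tail, the prefix up to $\lTail$ is fixed once $\gTail$ reaches $\lTail$, so at the moment of that read either $X$ lies strictly before $\lTail$ (hence non-last, so $\fFlag(X)=1$ by the second invariant) or $X=\lTail$ (so $\fFlag(X)=1$ by the first). Either way $\fFlag(X)=1$ held already when $\gTail$ was read, hence also, by monotonicity, when the guard on line~\ref{C4} was tested; but reaching line~\ref{C8} requires that test to have seen $\fFlag(X)=0$, a contradiction. Combined with monotonicity and the fact that each freshly allocated node has $\fNext=\mathnull$ (line~\ref{A10}), this shows every successful link attaches a previously unlinked node at the true tail, so the structure remains a connected, acyclic list and Property~\ref{gINode-connected-list} is preserved.
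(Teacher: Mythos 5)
Your proof is correct, and its treatment of the four stated properties follows the same core route as the paper's: induction over the steps of the execution, the observation that $\fNext$ fields are monotone (never changed once non-$\mathnull$), and a per-\CAS{} analysis of lines~\ref{C5.1}, \ref{C15}, \ref{C10}, \ref{D24}, and \ref{C8}, with Property~\ref{gINode-add-to-tail} pinned down by the fact that a successful \CAS{} on line~\ref{C8} forces $\lTail\rightarrow\fNext=\mathnull$, so $\gTail$ cannot have advanced off $\lTail$. Where you genuinely diverge is in folding \emph{acyclicity} into the same induction: your two auxiliary $\fFlag$ invariants and the line~\ref{C4} contradiction are, in substance, the paper's Lemma~\ref{node-linked-once-claim} (your first invariant is exactly its part~(a)) and Lemma~\ref{node-linked-once}, which the paper proves \emph{after} this lemma, using this lemma. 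The paper gets away with omitting $\fFlag$ here by reading Property~\ref{gINode-connected-list} narrowly (links are never removed) and by noting that wherever it says ``last node,'' the node in question has $\fNext=\mathnull$, which within a connected chain makes it the unique last node whether or not re-linking has been excluded yet; re-linking is then ruled out once, later, where it is first needed. Your integrated version buys self-containment: it answers head-on the worry that ``before'' and ``last'' are ill-defined if a node could be linked twice, and your route to the contradiction (every in-list non-last node is flagged, $\gTail$'s node is flagged, so a re-linked node would have been seen flagged on line~\ref{C4}) is arguably cleaner than the paper's three-way case analysis in the proof of Lemma~\ref{node-linked-once-claim}(b), which lands on lines~\ref{C4}, \ref{C7}, or a failed \CAS{} depending on the timing. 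The cost is a heavier simultaneous induction, and one step you should expand: the claim that ``the prefix up to $\lTail$ is fixed once $\gTail$ reaches $\lTail$'' needs the explicit observation that, because the successful \CAS{} implies $\lTail\rightarrow\fNext=\mathnull$ throughout the interval from the read on line~\ref{C2} to the \CAS{} on line~\ref{C8}, $\gTail$ cannot move off $\lTail$ in that interval, hence (by Property~\ref{gINode-add-to-tail}) \emph{no} node at all is inserted in that interval; this is what licenses your assertion that the re-linked node was already in the list at the moment $\gTail$ was read, rather than linked in concurrently.
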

	
	\begin{proof} 
		First observe that all these properties hold
		when the queue is initialized.
		From this, we show that each property must always hold as follows:
		\begin{enumerate}
			\item Since no step of the algorithm can change
			the $\textit{next}$ pointer of a node 
			from a non-$\mathnull$ value,
			links between nodes can never be removed.
			Thus the linked list starting from $\gINode$
			can never be disconnected, i.e.,
			Property~\ref{gINode-connected-list} always holds.
			
			\item $\gHead.\ptr$ can only be changed by 
			a successful CAS on line~\ref{D24}. 
			(Note that line~\ref{D15} cannot change $\gHead.\ptr$.)
			If the CAS in line~\ref{D24} is successful,
			then since line~\ref{D4} is executed before line~\ref{D24},
			observe that if $A$ is the node that $\gHead.\ptr$ originally pointed to
			and $B$ is the node that $\gHead.\ptr$ points to afterward,
			then the $\textit{next}$ pointer of $A$ points to $B$
			(because Property~\ref{gINode-connected-list} always holds).
			Thus $\gHead.\ptr$ is simply changed from one node to the next 
			in the linked list starting from $\gINode$. 
			So Property~\ref{gINode-head-move} always holds.
			
			\item The value of $\gTail$ can only be changed by 
			a successful CAS on one of lines \ref{C5.1}, \ref{C15} and \ref{C10}.
			If the CAS is successful,
			then since lines~\ref{C3repeat}, \ref{C3}, and \ref{C8} 
			are executed before lines 
			\ref{C5.1}, \ref{C15} and \ref{C10} respectively,
			observe that if $A$ is the node that $\gTail$ originally pointed to
			and $B$ is the node that $\gTail$ points to afterward,
			then the $\textit{next}$ pointer of $A$ points to $B$
			(because Property~\ref{gINode-connected-list} always holds).
			Thus $\gTail$ still points to a node in the linked list
			starting from $\gINode$.
			Moreover, $\gTail$ never lags behind $\gHead.\ptr$, 
			because line~\ref{D24},
			the only line that moves $\gHead.\ptr$ to the next node,
			is only executed when $\lHead.\ptr \ne \lTail$
			(line~\ref{D13}),
			where $\lHead = \gHead$ (line~\ref{D2})
			and $\lTail = \gTail$ (line~\ref{D3}).
			If $\gTail$ ever lags behind $\gHead.\ptr$, 
			then some operation must have moved $\gHead.\ptr$ 
			one step forward when $\lHead.\ptr = \lTail$, which never happens. 
			So Property~\ref{gINode-tail-in-list-after-head} always holds. 
			
			\item The only step of the algorithm that can link two nodes
			is a successful CAS on $\lTail\rightarrow\fNext$ on line~\ref{C8}.
			We need to show this $\lTail$ is pointing to
			the last node in the linked list starting from $\gINode$.
			Since Property~\ref{gINode-tail-in-list-after-head} always holds
			and $\lTail$ was set to $\gTail$ on line~\ref{C2},
			$\lTail$ points to a node in the linked list starting from $\gINode$.
			Since the CAS succeeds, the next pointer of this node is $\mathnull$.
			So since Property~\ref{gINode-connected-list} always holds,
			this node that that is pointed to by $\lTail$ 
			must be the last node of the linked list starting from $\gINode$.
			Furthermore, since Property~\ref{gINode-tail-move} always holds,
			this node is also still pointed to by $\gTail$.
			Thus Property~\ref{gINode-add-to-tail} always holds.  
			\qedhere
		\end{enumerate}
	\end{proof}

	\begin{definition}\label{Q-state}
		The state of a 2-nonblocking queue is defined to be 
		the linked list of nodes between the nodes pointed to 
		by $\gHead.\ptr$ and $\gTail$, 
		excluding the node pointed to by $\gHead.\ptr$.
	\end{definition}
	
	By Lemma~\ref{simple-prop}, the state of a queue is always well defined,
	because $\gHead.\ptr$ and $\gTail$ both point to nodes
	in the linked list of nodes starting from $\gINode$
	and $\gTail$ always points to a node that is not before the
	node pointed to by $\gHead.\ptr$.
	
	\begin{lemma}\label{node-linked-once-claim}
		The following two properties always hold: 
		\begin{enumerate}[\noindent(a)]
			\item $\gTail \rightarrow \fFlag $ is always $1$.
			\item Whenever a node is added to the list, it has $\fFlag=0$.
		\end{enumerate}
	\end{lemma}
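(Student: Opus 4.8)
The plan is to prove both parts as global invariants, by induction over the steps of any execution, exploiting two elementary facts about the $\fFlag$ field. First, it is written only on lines~\ref{A11}, \ref{C5}, \ref{C9} and~\ref{C14}: line~\ref{A11} sets it to $0$ on a freshly allocated node, before that node is ever linked or made visible through $\gAnnE$, while lines~\ref{C5}, \ref{C9} and~\ref{C14} only ever set it to $1$. Consequently the flag is monotone: once it becomes $1$ it stays $1$ forever. Throughout I would lean on Lemma~\ref{simple-prop}, in particular on the facts that the list is never disconnected (part~\ref{gINode-connected-list}), that $\gTail$ points into the list and advances only along $\fNext$ pointers (part~\ref{gINode-tail-in-list-after-head}), and that a node is linked by the CAS on line~\ref{C8} only at the current last node, which $\gTail$ points to (part~\ref{gINode-add-to-tail}).

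For part~(a), I would first note that $\gTail$ changes only through a successful CAS on one of lines~\ref{C5.1}, \ref{C10} and~\ref{C15}. The key point is that each of these advances $\gTail$ to a node whose flag was set to $1$ on the immediately preceding line of the same branch: line~\ref{C5} precedes~\ref{C5.1}, line~\ref{C9} precedes~\ref{C10}, and line~\ref{C14} precedes~\ref{C15}, and in every case the CAS targets exactly the node whose flag was just written ($\lnext$ or $\lNode$). By monotonicity that flag is still $1$ when the CAS takes effect, so after any change $\gTail \rightarrow \fFlag = 1$. Since initially $\gTail$ points to $\gINode$, whose flag is $1$, the invariant holds throughout.

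For part~(b), the only event that adds a node $Y$ to the list is a successful CAS on line~\ref{C8} with $\lNode = Y$. To reach line~\ref{C8} the executing process must have read $Y \rightarrow \fFlag \ne 1$ on line~\ref{C4}, i.e.\ $Y \rightarrow \fFlag = 0$ at that time; since line~\ref{C8} itself writes no flag, it suffices to show that no process sets $Y \rightarrow \fFlag = 1$ between that read and the successful CAS. I would establish the auxiliary invariant that a flag is raised to $1$ only while its node is already in the list: line~\ref{C9} flags the node just linked by line~\ref{C8} (in the list by Lemma~\ref{simple-prop}, part~\ref{gINode-add-to-tail}), and lines~\ref{C5} and~\ref{C14} flag $\lnext = \lTail\rightarrow\fNext \ne \mathnull$, a successor of the in-list node $\lTail$, hence in the list by connectivity. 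Thus if $Y$'s flag were raised before the append, $Y$ would already be in the list, so the append would be a \emph{second} linking of $Y$.

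The main obstacle is therefore ruling out this double-linking, and this is where the concurrency is most delicate: there is a window between the linking CAS on line~\ref{C8} and the flag-setting write on line~\ref{C9} during which a freshly linked node sits in the list with flag still $0$, so I cannot simply argue ``flag~$0$ implies not in the list.'' I would close the gap with Lemma~\ref{simple-prop} in two complementary ways. A successful CAS on line~\ref{C8} requires $\lTail \rightarrow \fNext = \mathnull$, so $\lTail$ is the unique last node of the connected list; if $Y$ were already linked it would be the successor of that same last node, contradicting that $\lTail \rightarrow \fNext$ equals $\mathnull$ at the CAS. And a process that reads the tail after $Y$ has been linked but before $\gTail$ advances will find $\lnext \ne \mathnull$ on line~\ref{C7}, so it takes the tail-advancing branch (lines~\ref{C14}--\ref{C15}) and never reaches the append on line~\ref{C8}. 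Together these show each node is appended at most once; hence no concurrent write can raise $Y$'s flag before it is linked, and $Y \rightarrow \fFlag = 0$ at the moment it is added.
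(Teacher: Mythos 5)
Your part (a) is correct and is essentially the paper's own argument ($\gTail$ moves only via lines~\ref{C5.1}, \ref{C15}, \ref{C10}, each preceded by setting the target node's flag, plus monotonicity of $\fFlag$). Your part (b) also starts the way the paper does: flags are raised only on in-list nodes, so appending a flag-$1$ node would mean a \emph{second} linking of that node, and it suffices to exclude double linking. The gap is in how you exclude it: your two ``complementary ways'' do not cover the decisive case, and that case is precisely the one the flag mechanism exists to prevent.

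Concretely, suppose $Y$ was linked to a node $A$ at time $t_2$, its $\fFlag$ was then set to $1$ (say by a process executing lines~\ref{C14}--\ref{C15}), and $\gTail$ has since advanced to or beyond $Y$. Later a process $q$ still holding $Y$ (the owner retrying in the loop of line~\ref{A12}, or a helper that read $Y$ from $\gAnnE$) runs $\trynq(Y)$: on line~\ref{C2} it reads into $\lTail$ the \emph{current} last node ($Y$ itself or a node after $Y$), on line~\ref{C3} it reads $\lnext = \mathnull$, and if it reached line~\ref{C8} its CAS would \emph{succeed}, creating a cycle. Neither of your arguments applies here: the CAS does not fail --- your claim that an already-linked $Y$ ``would be the successor of that same last node'' conflates the node that was last when $Y$ was linked (namely $A$, by Property~\ref{gINode-add-to-tail} of Lemma~\ref{simple-prop}) with the node $\lTail$ that is last at the time of the second CAS, and in this scenario $A$ lies strictly before $\lTail$ --- and line~\ref{C7} does not divert, because $q$ read the tail \emph{after} $\gTail$ advanced. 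The only safeguard is the check on line~\ref{C4}, backed by part (a): once $\gTail$ has pointed to $Y$, monotonicity gives $Y \rightarrow \fFlag = 1$ forever, so $q$ is diverted at line~\ref{C4}, contradicting the fact (which you established but then stopped using) that a process reaching line~\ref{C8} read $\fFlag = 0$ there. This is exactly the paper's subcase ``$r\rightarrow\fFlag$ is set to $1$ before $t_0$''. To repair the proof, split on the value of $Y\rightarrow\fFlag$ at the line~\ref{C4} read: if it is $1$, the process never reaches line~\ref{C8}; if it is $0$, then by part (a) and the node-by-node advancement of $\gTail$ (Property~\ref{gINode-tail-move} of Lemma~\ref{simple-prop}), $\gTail$ has not yet reached $Y$, so $\lTail$ lies at or before $Y$'s predecessor $A$, whose $\fNext$ is non-$\mathnull$ from $t_2$ on; hence the CAS on line~\ref{C8} fails. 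Only with both halves is double linking excluded.
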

	
	\begin{proof} 
		For (a) we note that $\gTail \rightarrow \fFlag$ is initially~1.
		Furthermore,
		$\gTail$ is changed only on lines~\ref{C5.1},~\ref{C15}, and~\ref{C10}.
		In each case, the preceding line has set to~$1$ the $\fFlag$ field
		of the node to which $\gTail$ will point.
		Finally, observe that after the $\fFlag$ field of this node is set to~$1$, 
		it can never be changed by any step of the algorithm.
		Thus (a) always holds.
		
		For (b) we note that the only place where a node is added to the list
		is on line~\ref{C8}.
		Suppose, for contradiction, that
		a process $p$ adds to the list a node $r$ with $\fFlag=1$
		in some execution of line~\ref{C8}, say at time $t_1$.
		The $\fFlag$ field of a node is set to~$1$
		only in lines~\ref{C5},~\ref{C14}, and~\ref{C9},
		and in all these cases the node has already been added to the list.
		Therefore, some process $p'$ added $r$ to the list before time $t_1$,
		say at time $t_2$.
		Let $t_0$ be the last time before $t_1$ when $p$ executed line~\ref{C2}.
		If $t_2$ is before $t_0$,
		then there are two cases: either $r\rightarrow\fFlag$ is set to $1$
		before $t_0$, or not.
		If $r\rightarrow\fFlag$ is set to $1$ before $t_0$,
		then the process $p$ that would add $r$ to the list at time $t_1$
		would, after time $t_0$, find that $r\rightarrow\fFlag = 1$ 
		on line~\ref{C4},
		and so not execute line~\ref{C8} at time $t_1$,
		contradicting that $p$ adds $r$ to the list at that time.
		Otherwise, since $r\rightarrow\fFlag$ was not set to $1$ before $t_0$
		and we have already proven that (a) always holds,
		$\gTail$ is also not moved to the added node $r$ before $t_0$ either.
		So the value of $\gTail \rightarrow \fNext$ that $p$ read after $t_0$
		is not $\mathnull$, but rather a pointer to the node $r$.
		But then, by line~\ref{C7}, $p$ does not execute line~\ref{C8} 
		at time $t_1$, contradicting that $p$ adds $r$ to the list at that time.
		
		If $t_2$ is between $t_0$ and $t_1$,
		then there are two cases. 
		In case 1, $p$ finds that the value of $\gTail \rightarrow \fNext$ 
		that $p$ read after $t_0$
		is not $\mathnull$, but rather a pointer to the node $r$.
		So again, by line~\ref{C7}, $p$ does not execute 
		line~\ref{C8} at time $t_1$,
		contradicting that $p$ adds $r$ to the list at that time.
		In case 2, $p$ finds that the value of $\gTail \rightarrow \fNext$ 
		that $p$ read after $t_0$
		is $\mathnull$, but since $t_2 < t_1$,
		$\gTail \rightarrow \fNext$ is changed from $\mathnull$
		to a pointer to $r$ at time $t_2$.
		So the CAS that $p$ executes at time $t_1$ fails,
		again contradicting that $p$ adds $r$ to the list at that time.
	\end{proof}	 
	
	\begin{lemma}\label{node-linked-once}
		Any node is linked in the queue at most once.
	\end{lemma}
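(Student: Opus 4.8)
The plan is to argue by contradiction, exploiting the fact that a node can become linked only through a successful \CAS\ on line~\ref{C8}, together with the structural invariants already established. First I would record that the only step that links a node is a successful $\lTail\rightarrow\fNext.\CAS(\mathnull,\lNode)$ on line~\ref{C8}, which sets the $\fNext$ field of the node pointed to by $\lTail$ to point to the argument node. So suppose, for contradiction, that some node $r$ is linked twice: first at time $t_1$ (by a process $p_1$, becoming the successor of a node $A_1$) and later at time $t_2>t_1$ (by a process $p_2$, becoming the successor of a node $A_2$). Since by Lemma~\ref{simple-prop}(\ref{gINode-connected-list}) a $\fNext$ pointer never reverts once it is non-$\mathnull$, after $t_1$ we have $A_1\rightarrow\fNext=r$ forever; in particular $A_1\ne A_2$, for otherwise the \CAS\ at $t_2$ would fail.

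Next I would trace the control flow of $p_2$ up to its \CAS\ at $t_2$. To reach line~\ref{C8}, $p_2$ must have read $\lTail_2=\gTail$ on line~\ref{C2} (at some time $t_0$), read $\lTail_2\rightarrow\fNext=\mathnull$ on line~\ref{C3} (at some time $s_3$, for otherwise the test on line~\ref{C7} diverts it), and read $r\rightarrow\fFlag=0$ on line~\ref{C4} (at some time $s_4$, for otherwise it takes the branch ending at line~\ref{C5.3} and never reaches line~\ref{C8}); thus $t_0<s_3<s_4<t_2$, the \CAS\ at $t_2$ acts on $\lTail_2\rightarrow\fNext$, so $A_2=\lTail_2$. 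Because this \CAS\ succeeds with precondition $\mathnull$ and $\fNext$ is monotone, $\lTail_2\rightarrow\fNext$ stays $\mathnull$ throughout $[s_3,t_2]$, so $\lTail_2$ is the last node of the list during that whole interval. The invariants I would then invoke are Lemma~\ref{simple-prop}(\ref{gINode-add-to-tail}) (a node is linked only while it is the last node and $\gTail$ points to it), Lemma~\ref{node-linked-once-claim}(a) ($\gTail\rightarrow\fFlag$ is always~$1$), and the fact that, since no second linking has yet occurred before $t_2$, the list remains a simple path up to $t_2$, so its last node is unique.

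The contradiction then comes from a case analysis on when $r$ was first linked relative to $s_3$. If $t_1\ge s_3$, then at $t_1$ the node $A_1$ is the last node by Lemma~\ref{simple-prop}(\ref{gINode-add-to-tail}), but $\lTail_2$ is also the last node at $t_1\in[s_3,t_2]$; uniqueness forces $A_1=\lTail_2$, so the linking at $t_1$ sets $\lTail_2\rightarrow\fNext=r\ne\mathnull$, contradicting that $\lTail_2\rightarrow\fNext$ is $\mathnull$ throughout $[s_3,t_2]$. If instead $t_1<s_3$, then $r$ already lies in the simple-path list at $s_3$, where $\lTail_2$ is the last node. If $r=\lTail_2$, then $\gTail=r$ when $p_2$ read it on line~\ref{C2}, whence $r\rightarrow\fFlag=1$ at $t_0$ by Lemma~\ref{node-linked-once-claim}(a), contradicting $p_2$'s read of $0$ at $s_4>t_0$. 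Otherwise $r$ is strictly before the last node $\lTail_2$, so $r$ has a successor already linked by $s_3$; the process that linked that successor held $\gTail=r$ on its line~\ref{C2} at some time before $s_3$, which again forces $r\rightarrow\fFlag=1$ before $s_4$ by Lemma~\ref{node-linked-once-claim}(a), contradicting $p_2$'s read of $0$.

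I expect the main obstacle to be the case $t_1\ge s_3$, where $r$ becomes the last node while $p_2$ is still holding its now-stale local tail pointer $\lTail_2$: handling it cleanly requires uniqueness of the last node in the simple-path list, which in turn rests on the meta-observation that all invariants of Lemma~\ref{simple-prop} (especially Property~\ref{gINode-add-to-tail}) remain valid up to $t_2$ precisely because the hypothetical second linking has not yet taken place. Making this ``no cycle before $t_2$'' reasoning rigorous, and keeping the timing relations among $t_0,s_3,s_4,t_1,t_2$ straight, is where the care is needed; the remaining cases reduce, as above, to the monotonicity of $\fFlag$ together with Lemma~\ref{node-linked-once-claim}(a).
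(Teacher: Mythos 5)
Your proof is correct, but it follows a genuinely different route from the paper's. The paper's proof is global and modular: just before the second linking of $r$, Property~\ref{gINode-add-to-tail} of Lemma~\ref{simple-prop} places $\gTail$ at the last node of the list; since $\gTail$ starts at $\gINode$ and only ever advances one link at a time (Properties~\ref{gINode-connected-list} and~\ref{gINode-tail-move}), it must already have pointed to \emph{every} node of the list, in particular to $r$; hence $r\rightarrow\fFlag = 1$ by Lemma~\ref{node-linked-once-claim}(a) and monotonicity of $\fFlag$, and the second linking then directly contradicts Lemma~\ref{node-linked-once-claim}(b) (a node being added always has $\fFlag = 0$). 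You never invoke claim (b) or this ``$\gTail$ has visited every node'' sweep; instead you inline an operational case analysis on the control flow of the second linker $p_2$ (its reads on lines~\ref{C2}, \ref{C3}, \ref{C4}, and the window $[s_3,t_2]$ during which $\lTail_2\rightarrow\fNext$ must stay $\mathnull$), obtaining the contradiction from CAS semantics when $t_1\ge s_3$, and from claim (a) plus $p_2$'s read of $\fFlag=0$ when $t_1<s_3$. In effect you re-derive in place exactly the instance of claim (b) that the lemma needs --- your timing analysis closely parallels the paper's own proof of claim (b). The paper's route buys a very short lemma proof given the machinery already established; yours buys independence from claim (b) and from the traversal argument, at the cost of heavier timing bookkeeping.

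The one loose end is the one you flag yourself: the uniqueness-of-the-last-node step requires the list to be a simple path up to $t_2$, and your justification (``no second linking has yet occurred before $t_2$'') is legitimate only if $t_2$ is chosen as the \emph{earliest} second-linking event over all nodes; your setup, which fixes one doubly-linked node $r$ and arbitrary times $t_1<t_2$, does not make that choice, so you should add it. Alternatively, no minimality is needed at all: $\lTail_2$ is reachable from $\gINode$ (Property~\ref{gINode-tail-in-list-after-head} at time $t_0$, plus the fact that links are never removed), each node has a single $\fNext$ pointer so the path from $\gINode$ is unique, and a path that reaches a node whose $\fNext$ is $\mathnull$ (as $\lTail_2$'s is throughout $[s_3,t_2]$) can contain no repeated node --- a repetition would create a cycle, and the path would then never reach a $\mathnull$ pointer. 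Either repair closes the gap, so this is a presentational fix rather than a flaw in the argument.
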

	
	\begin{proof} 
		Suppose, for contradiction, that some node $r$ is linked 
		into the queue more than once.
		Consider the time just before $r$ is linked 
		into the queue for the second time.
		
		By Property~\ref{gINode-connected-list} of Lemma~\ref{simple-prop}
		and Definition~\ref{Q-state},
		$r$ is already within the 
		the linked list starting from $\gINode$ at this time.
		By Property~\ref{gINode-add-to-tail} of Lemma~\ref{simple-prop},
		$\gTail$ points to the last node in this linked list
		just before $r$ is linked into it again.
		Then, since Properties~\ref{gINode-connected-list} 
		and~\ref{gINode-tail-move} of 
		Lemma~\ref{simple-prop} always hold
		and $\gTail$ now points to the last node of the linked list,
		$\gTail$ has previously pointed to every node of the linked list.		
		Consequently, since $r$ is already within the linked list
		at this time, there must exist an earlier time 
		when $\gTail$ pointed to $r$.
		Thus by Lemma~\ref{node-linked-once-claim}(a),
		$r\rightarrow\fFlag = 1$ at this earlier time.
		
		Finally, observe that $r\rightarrow\fFlag$ can never be changed from $1$.
		Thus $r\rightarrow\fFlag$ still contains $1$ at the time
		when $r$ is linked into the queue for the second time
		--- contradicting Lemma~\ref{node-linked-once-claim}(b). 
	\end{proof}
	
	Now observe that by Lemmas~\ref{simple-prop} and\ref{node-linked-once}:
	
	\begin{corollary} \label{tail-node-most-once}
		The $\gTail$ can be set to point to each node at most once.
	\end{corollary}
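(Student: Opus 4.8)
The plan is to prove the corollary by contradiction, reducing it to a single structural fact: the list of nodes reachable from $\gINode$ is a finite \emph{simple path}, and $\gTail$'s successive values form a monotone forward walk along it. Granting this, the corollary is immediate, since a monotone forward walk on a simple path cannot revisit a node. So the real work is to extract this structural fact from the two lemmas. The key handle is Property~\ref{gINode-tail-move} of Lemma~\ref{simple-prop}: whenever $\gTail$ changes from pointing to a node $A$ to pointing to a node $B$, we have $A\rightarrow\fNext = B$, so every move of $\gTail$ advances it from a node to that node's $\fNext$-successor, and never skips.

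First I would pin down the graph structure of the $\fNext$ pointers. Out-degree is at most one: as observed in the proof of Property~\ref{gINode-connected-list}, no step ever rewrites a non-$\mathnull$ $\fNext$ field, so each node's successor, once set, is unique and permanent. In-degree is at most one: this is exactly Lemma~\ref{node-linked-once}, since a node $r$ acquiring an incoming link corresponds to some node's $\fNext$ being set to $r$, and this can happen at most once. Finally, $\gINode$ has in-degree zero, because the only step that links a node in as a target is the CAS on line~\ref{C8}, which always links a freshly allocated node (from line~\ref{A8}), never $\gINode$. Combining these with connectedness from $\gINode$ (Property~\ref{gINode-connected-list}), the reachable structure is forced to be a simple path: if the deterministic successor-walk starting at $\gINode$ ever repeated a node, the first repeated node would either have two distinct predecessors (violating in-degree one) or be $\gINode$ itself with a predecessor (violating in-degree zero). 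I expect this acyclicity step to be the main obstacle, and it is precisely where Lemma~\ref{node-linked-once} is indispensable; without the in-degree bound, a later link could merge two branches into a cycle and let $\gTail$ loop back to a node it had already left.

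To finish, I would combine the structural claim with Property~\ref{gINode-tail-move}: since $\gTail$ only ever moves from a node to its unique successor along this acyclic path, its trajectory is a prefix of the path $\gINode, s(\gINode), s(s(\gINode)), \dots$, so it points to each node at most once. Alternatively, one can phrase the same argument as a direct contradiction: if $\gTail$ were set to some node $B$ twice, then by Property~\ref{gINode-tail-move} each setting advances from a predecessor whose $\fNext$ points to $B$; two \emph{distinct} such predecessors would link $B$ twice, contradicting Lemma~\ref{node-linked-once}, while a \emph{single} repeated predecessor $A$ would require $\gTail$ to return from $B$ back to $A$, closing an $\fNext$-cycle through $A\rightarrow\fNext=B$, which the acyclicity argument above rules out. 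Either framing relies on the same crux, so I would present whichever reads more cleanly once the structural lemma is in place.
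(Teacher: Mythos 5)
Your proof is correct and follows essentially the same route as the paper: the paper states this corollary without a separate proof, as an immediate consequence of Lemmas~\ref{simple-prop} and~\ref{node-linked-once}, and your argument is exactly the careful elaboration of that implication (Property~\ref{gINode-tail-move} makes $\gTail$'s trajectory a forward walk along $\fNext$-pointers, while Lemma~\ref{node-linked-once} plus the fact that $\gINode$ is never a link target make the list a simple path). The only nit is your justification that $\gINode$ has in-degree zero: the node linked on line~\ref{C8} need not be freshly allocated by the linking process (it may come from $\gAnnE$), but it can never be $\gINode$ because Lemma~\ref{node-linked-once-claim}(b) says any linked node has $\fFlag=0$ whereas $\gINode$'s flag is permanently $1$.
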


	\begin{lemma}
		\label{ptr-in-gHead-once}
		Any new location allocated by a $\DQ$ operation 
		can not be pointed to by $\gHead.\add$ more than once.
	\end{lemma}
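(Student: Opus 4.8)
The plan is to argue that a fresh location $\ell$ allocated on line~\ref{B10} can be written into $\gHead.\add$ by a successful CAS on at most one of lines~\ref{D15} or~\ref{D24}, and that no second successful CAS can write the same $\ell$ again. The key observation is that $\gHead$ is only ever modified by a successful CAS, and the $\add$ field of the value installed into $\gHead$ is always the argument $\ladd$ of the enclosing $\trydq$ call. So I need to show that, once some $\trydq(\ell)$ installs $\ell$ into $\gHead.\add$, no subsequent (or concurrent) $\trydq(\ell)$ call can succeed in its CAS.

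First I would note structurally that a location $\ell$ is passed to $\trydq$ only via line~\ref{B12} (the selfish phase, where $\ell$ is the caller's own reserved location) or via line~\ref{B6} (the altruistic phase, where $\ell = \lAnnD$ was read from $\gAnnD$ on line~\ref{B2}). Since each $\DQ$ operation allocates $\ell$ exactly once on line~\ref{B10} and only the owner writes $\ell$ into $\gAnnD$ on line~\ref{B14}, the set of processes that can ever invoke $\trydq$ with argument $\ell$ is well understood; but for this lemma what matters is only what happens \emph{at} the successful CAS, not who invokes it. The crucial point to establish is a monotonicity/uniqueness property of $\gHead$: I would argue that the value stored in $\gHead$ changes with every successful CAS (on line~\ref{D24} the $\ptr$ field advances by Property~\ref{gINode-head-move} of Lemma~\ref{simple-prop}, and by Corollary~\ref{tail-node-most-once} together with Lemma~\ref{node-linked-once} each node is pointed to by the relevant pointers at most once), so that $\gHead$ never returns to a previously held value. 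This is the analogue for $\gHead$ of Corollary~\ref{tail-node-most-once}.

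The main step, then, is to show that two successful CAS operations installing the \emph{same} location $\ell$ into $\gHead.\add$ would force $\gHead$ to repeat a state, or else would require the second CAS to read a stale $\lHead$ whose $\add$ field already equals $\ell$. I would proceed by contradiction: suppose $\ell$ is installed into $\gHead.\add$ at two distinct times $t_1 < t_2$, each by a successful CAS. The CAS at $t_2$ (on line~\ref{D15} or~\ref{D24}) succeeds only if $\gHead$ still equals the $\lHead$ read on line~\ref{D2} by that operation. Since $\gHead$ progresses monotonically (its $\ptr$ field only advances through distinct nodes, never revisiting one, by Corollary~\ref{tail-node-most-once}), and since the installation at $t_1$ already set $\gHead.\add = \ell$ while advancing $\gHead$ away from its pre-$t_1$ value, the $\lHead$ read at $t_2$ cannot equal the current $\gHead$ unless the reading process read $\gHead$ at a time when $\gHead.\add$ was already $\ell$ — but then its intended new value would again contain $\ell$, and the successful CAS would be installing $\ell$ into a $\gHead$ whose $\add$ field is \emph{already} $\ell$, which contradicts that the CAS genuinely changes $\gHead$ (the CAS value differs from the expected value only in fields other than $\add$, yet those fields must also have changed between $t_1$ and $t_2$, a contradiction with the succeeding CAS matching the stale $\lHead$).

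The hard part will be handling the two CAS sites uniformly and ruling out the subtle interleaving where the second $\trydq(\ell)$ reads $\gHead$ \emph{before} $t_1$ but performs its CAS after $t_1$: in that window its expected value $\lHead$ predates the first installation, so by the monotone progress of $\gHead$ its CAS must fail. I would therefore lean on the fact that the successful CAS at $t_1$ changed $\gHead$ to a value never held before (this is where Corollary~\ref{tail-node-most-once} and Property~\ref{gINode-head-move} do the real work), so any CAS whose expected value $\lHead$ was read at or before $t_1$ cannot succeed after $t_1$. Making precise that ``$\gHead$ never revisits a value'' — including that the $\ptr$ field strictly advances and the node identities are never reused — is the technical crux, and it is exactly the $\gHead$-counterpart of the reasoning already carried out for $\gTail$ in Corollary~\ref{tail-node-most-once}.
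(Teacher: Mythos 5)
Your proposal has a genuine gap, and it sits exactly where the real work of the proof lies. Consider your case in which the later installer $q$ reads $\gHead$ at a time $t_2 > t_1$, so that its expected value $\lHead$ already has $\lHead.\add = \ell$. You claim a contradiction on the grounds that $q$'s CAS would then not ``genuinely change'' $\gHead$; but that is not a contradiction. A successful CAS on line~\ref{D24} with expected value $\lHead$ and new value $\langle \lnext, v, \ell \rangle$ advances $\ptr$ and changes $\val$ while re-installing the very same $\add = \ell$, and no monotonicity property of $\gHead$ forbids this, so your argument does not close this case. What actually forbids it is the write-back/early-return mechanism of $\trydq$, which you never invoke: the process $q$ that installs $\ell$ necessarily called $\trydq(\ell)$, so after reading $\gHead$ with $\gHead.\add = \ell$ on line~\ref{D2}, $q$ itself writes the non-$\mathnull$ value $\lHead.\val$ into ${}^*\ell$ on line~\ref{D6}, then finds ${}^*\ell \ne \mathnull$ on line~\ref{D9} and returns on line~\ref{D10} --- it never reaches a CAS. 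More generally, before $\gHead$ can move away from the value installed at $t_1$, the first process to change it must itself have read $\gHead.\add = \ell$ and executed line~\ref{D6}, making ${}^*\ell$ permanently non-$\mathnull$; from then on every call of $\trydq(\ell)$ exits at lines~\ref{D9}--\ref{D10}. This mechanism is the entire content of the paper's proof; without lines~\ref{D6}, \ref{D9} and \ref{D10} the lemma is simply false.

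There is also a structural problem with your key tool: the claim that ``$\gHead$ never revisits a value,'' offered as the analogue of Corollary~\ref{tail-node-most-once}, is circular. For consecutive empty-queue dequeues, the CAS on line~\ref{D15} leaves $\ptr$ unchanged and writes $\val = \emptyq$, so successive $\gHead$ values can differ only in their $\add$ fields --- and asserting that those are pairwise distinct is precisely the lemma you are trying to prove. The $\gTail$ analogy does not transfer: every change of $\gTail$ moves it to a node it has never pointed to before (Lemmas~\ref{node-linked-once-claim} and~\ref{node-linked-once}), whereas line~\ref{D15} does not move $\gHead.\ptr$ at all. Your first case ($t_2 < t_1$) is fine in spirit --- the paper argues it the same way, via failure of a CAS whose expected value is stale --- but even there the rigorous justification ultimately rests on the line~\ref{D6}/\ref{D9} mechanism rather than on a no-ABA property of $\gHead$ that is not independently available.
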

	
	\begin{proof}
		Suppose, for contradiction, that $\gHead.\add$ 
		points to a location $\add_0$ 
		allocated by a single $\DQ$ operation twice. 
		Let $t_1$ be the earliest time when
		a process $p$ executes line \ref{D15} or \ref{D24} and 
		succeeds in setting $\gHead.\add$ to point to $\add_0$. 
		Let $t_3$ be the earliest time after time $t_1$ when 
		a process $q$ (not necessarily different from $p$) 
		executes line~\ref{D15} or \ref{D24} and also
		succeeds in setting $\gHead.\add$ to point to $\add_0$.	
		Let $t_2$ be the last time $q$ reads $\gHead$ on line~\ref{D2}
		before $q$ sets $\gHead.\add$ to point to $\add_0$. 
		If $t_2$ is before $t_1$, then since process $p$ changes 
		the value of $\gHead$ at time $t_1$,
		at time $t_3$, $\lHead \ne \gHead$ for process $q$
		on line~\ref{D15} or~\ref{D24}, so $q$ fails its CAS,
		a contradiction.
		
		If $t_2$ is after $t_1$, process $q$ will read a $\gHead$ value 
		that has previously been set by $p$. 
		It is either a newer value or the value set by $p$. 
		In the latter case, process $q$ will execute line~\ref{D6} 
		to change the value in the location $\add_0$, 
		and then it finds the value in the location $\add_0$ is not $\mathnull$ 
		in line~\ref{D9}. 
		So $q$ will return on line~\ref{D10} 
		without executing line~\ref{D15} 
		or~\ref{D24}, a contradiction. 
		
		In the former case, there must be some other process(es) 
		performing successful CAS(es) on line \ref{D15} or \ref{D24} 
		to change $\gHead$ after $p$ changes $\gHead$. 
		At least one of them must see $\gHead.\add = \add_0$, 
		and executes line~\ref{D6} (and these happens between $t_1$ and $t_2$). 
		Then after time $t_2$, process $q$ will find that 
		the value in the location $\add_0$ is not $\mathnull$ 
		in line~\ref{D9}. 
		So $q$ will return on line~\ref{D10} 
		without executing line~\ref{D15} 
		or~\ref{D24}, a contradiction. 
	\end{proof}
	
	\begin{lemma}\label{flag-means-in-queue}
		If a node's $\fFlag$ field contains $1$,
		then the node is in the linked list of nodes starting from $\gINode$.
	\end{lemma}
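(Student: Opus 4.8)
The plan is to prove this as an invariant that is preserved by every step of the algorithm. Initially the only node whose $\fFlag$ field is $1$ is $\gINode$, and $\gINode$ is the sole node of the linked list, so the claim holds at the start; every other node is allocated with $\fFlag=0$ on line~\ref{A11}, so freshly allocated nodes satisfy the claim vacuously. Since a node is never removed from the linked list starting from $\gINode$ (a $\fNext$ pointer, once non-$\mathnull$, never changes, so by Property~\ref{gINode-connected-list} of Lemma~\ref{simple-prop} the list is never disconnected), list membership is monotone: once a node is in the list it stays in the list. Moreover, as already observed in the proofs of Lemmas~\ref{node-linked-once-claim} and~\ref{node-linked-once}, a $\fFlag$ field is never changed from $1$ back to $0$. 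Consequently the only way the invariant could break is a step that sets some node's $\fFlag$ to $1$ while that node is not yet in the list, so it suffices to inspect the three lines at which a $\fFlag$ is set to $1$ --- namely lines~\ref{C9},~\ref{C5}, and~\ref{C14} --- and argue that in each case the affected node is already in the list.

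First I would handle line~\ref{C9}. This line is reached only immediately after the successful $\CAS$ on line~\ref{C8} that links $\lNode$ into the list, and by Property~\ref{gINode-add-to-tail} of Lemma~\ref{simple-prop} that $\CAS$ appends $\lNode$ as the new last node. Hence $\lNode$ is in the list at the moment its $\fFlag$ is set, and the invariant is preserved.

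Next I would handle lines~\ref{C5} and~\ref{C14}, which are symmetric. In both cases the node whose $\fFlag$ is set is the one pointed to by $\lnext$, where $\lnext = \lTail\rightarrow\fNext$ was read (on line~\ref{C3repeat}, resp.~line~\ref{C3}) and found to be non-$\mathnull$, and $\lTail$ had been copied from $\gTail$ (on line~\ref{C2repeat}, resp.~line~\ref{C2}). Since $\gTail$ always points to a node of the list (Property~\ref{gINode-tail-in-list-after-head} of Lemma~\ref{simple-prop}) and list membership is monotone, $\lTail$ still points to a list node when the write occurs; and since its $\fNext$ pointer is non-$\mathnull$ and the list is never disconnected (Property~\ref{gINode-connected-list}), the node pointed to by $\lnext$ is also in the list. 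Thus setting its $\fFlag$ to $1$ cannot violate the invariant.

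The step I expect to require the most care is the temporal reasoning in the cases of lines~\ref{C5} and~\ref{C14}: the pointers $\lTail$ and $\lnext$ are read \emph{before} the flag is written, so I must invoke the monotonicity of list membership (nodes are never removed) to conclude that the node they identify is still in the list at the later time of the write, rather than merely at the time it was read. Once this monotonicity is in hand, the remainder is a routine case analysis over the three flag-setting lines, and the invariant follows.
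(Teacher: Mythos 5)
Your proof is correct and follows essentially the same route as the paper's: a case analysis over the three flag-setting lines (lines~\ref{C5}, \ref{C14}, \ref{C9}), using Lemma~\ref{simple-prop} to conclude that the affected node is already in the list. The only difference is that you make explicit the supporting facts the paper leaves implicit (flags never revert from $1$, list membership is monotone), which is a welcome bit of extra rigor but not a different argument.
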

	
	\begin{proof}
		The node $\gINode$ begins with $\fFlag = 1$,
		but all other nodes have $\fFlag$ set to $0$ when allocated 
		(lines~\ref{A8} to \ref{A11}).
		The $\fFlag$ field of a node can then be set to $1$
		only on lines~\ref{C5}, \ref{C14}, and \ref{C9}.
		
		If a node's $\fFlag$ field is set to $1$
		on line~\ref{C5} or~\ref{C14},
		then this node was previously
		the node after the node pointed to by $\gTail$
		(lines~\ref{C2} to \ref{C3} or lines~\ref{C2repeat} to \ref{C3repeat}).
		So by Lemma~\ref{simple-prop}, this node is
		in the linked list of nodes starting from $\gINode$.
		
		If a node's $\fFlag$ field is set to $1$
		on line~\ref{C9},
		then this node was previously linked on line~\ref{C8}
		to a node previously pointed to by $\gTail$ (line~\ref{C2}).
		So by Lemma~\ref{simple-prop}, this node is
		in the linked list of nodes starting from $\gINode$.
	\end{proof}
	
	\begin{lemma} \label{NQ-lin-once}
		For each $\NQ$ operation,
		$\gTail$ has been set to point to 
		the new node it allocates on line~\ref{A8}
		exactly once. 
	\end{lemma}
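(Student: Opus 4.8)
The plan is to prove the two halves of ``exactly once'' separately: that $\gTail$ is set to $\lNode$ \emph{at most} once, and that it is set to $\lNode$ \emph{at least} once, in fact by the time the operation returns. The first half is immediate: $\lNode$ is a freshly allocated node (line~\ref{A8}), so by Corollary~\ref{tail-node-most-once} the pointer $\gTail$ can be made to point to it at most once. For the second half I would first reduce to a statement about $\trynq$: an $\NQ(v)$ operation returns $\vDone$ (line~\ref{A15}) precisely when its last \emph{selfish} call $\trynq(\lNode)$ from the loop on line~\ref{A12} returns $\vDone$, and such a call can only return on line~\ref{C11} or on line~\ref{C5.3}. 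So it suffices to show that before either of these two returns, $\gTail$ has already pointed to $\lNode$.

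The key supporting fact I would isolate is this: if a process sets some node $r$'s $\fFlag$ to $1$ on one of lines~\ref{C5}, \ref{C14}, \ref{C9} and then executes the matching $\gTail$-CAS on line~\ref{C5.1}, \ref{C15}, or \ref{C10}, whose old value is the predecessor of $r$ and whose new value is $r$, then $\gTail$ points to $r$ no later than that CAS. The argument is a case split on the CAS: if it succeeds, $\gTail$ points to $r$ right then; if it fails, then $\gTail$ has already moved off the predecessor read earlier on line~\ref{C2}/\ref{C2repeat}, and since by Property~\ref{gINode-tail-move} of Lemma~\ref{simple-prop} the pointer $\gTail$ advances one node at a time along the list, and the predecessor's $\fNext$ points to $r$, that advancement must have passed through $r$. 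Either way $\gTail$ pointed to $r$ by the time the CAS completes.

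Applying this to the C11 case is direct: the caller itself links $\lNode$ on line~\ref{C8}, sets $\lNode\rightarrow\fFlag$ on line~\ref{C9}, and runs the $\gTail$-CAS on line~\ref{C10}, so $\gTail$ points to $\lNode$ before the call returns. The C5.3 case is the delicate one, because there $\lNode\rightarrow\fFlag$ was set to $1$ by \emph{another} process (a helper) whose own tail-CAS may lag arbitrarily far behind and so cannot be used directly. To handle it I would invoke the invariant that $\gTail$ trails the last node of the list by at most one node; this follows because a new node is linked (line~\ref{C8}) only when the tail read on line~\ref{C2} has $\fNext=\mathnull$ (the test on line~\ref{C7} together with Property~\ref{gINode-add-to-tail} of Lemma~\ref{simple-prop}), so the list only ever grows while the tail is fully caught up. Granting this, when the refreshed reads on lines~\ref{C2repeat}--\ref{C3repeat} find $\gTail$ strictly before $\lNode$, the node $\lNode$ (which is in the list by Lemma~\ref{flag-means-in-queue}, since its flag is $1$) must be exactly the successor of the read tail, so $\lnext=\lNode$ and the caller's \emph{own} CAS on line~\ref{C5.1} drives $\gTail$ to $\lNode$ before line~\ref{C5.3}; and if instead $\gTail$ was already at or past $\lNode$, then by one-step monotonicity it had passed through $\lNode$ earlier. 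In every subcase $\gTail$ points to $\lNode$ before the operation returns, which together with the first half gives ``exactly once'' and discharges the $\NQ$ hypothesis of Observation~\ref{well-defined-L}.

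I expect the main obstacle to be exactly the C5.3 case: unlike C11, the flag of $\lNode$ is set by a helper, so I must argue that \emph{the completing process itself} advances $\gTail$ to $\lNode$ before returning, rather than relying on the helper's lagging tail-CAS. This is what forces the ``tail lags by at most one node'' invariant, so that the successor read on line~\ref{C3repeat} is guaranteed to equal $\lNode$; proving that invariant carefully under concurrency (two enqueuers racing on the line~\ref{C8} CAS, so that only one link per node occurs) is the part that needs the most attention.
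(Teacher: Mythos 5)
Your proof is correct and takes essentially the same approach as the paper's: both reduce the claim to the two $\vDone$-return points of $\trynq$ (lines~\ref{C5.3} and~\ref{C11}), handle the line~\ref{C11} case via the caller's own link and tail-CAS, handle the line~\ref{C5.3} case via Lemma~\ref{flag-means-in-queue} together with Properties~\ref{gINode-add-to-tail} and~\ref{gINode-tail-move} of Lemma~\ref{simple-prop}, and obtain ``at most once'' from Corollary~\ref{tail-node-most-once}. Your isolated ``key supporting fact'' and the ``tail lags by at most one node'' invariant are just explicit spellings-out of reasoning the paper leaves implicit in its inline case analysis.
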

	
	\begin{proof}
		An $\NQ$ operation returns only after, 
		it has performed a $\trynq$ procedure on line~\ref{A12}
		with the node that it allocated on line~\ref{A8} 
		and the $\trynq$ procedure has returned $\vDone$ instead of $\vFail$.
		A $\trynq$ procedure returns $\vDone$ 
		either on line~\ref{C5.3} or line~\ref{C11}.
		
		In the former case,
		the $\trynq$ procedure previously found that 
		$\lNode \rightarrow \fFlag = 1$ (line~\ref{C4}),
		and then attempted to move $\gTail$ forward 
		in the linked list of nodes starting from $\gINode$ 
		if $\gTail$ was not already at the end of the list
		(lines~\ref{C2repeat} to \ref{C5.1}).
		By Lemma~\ref{flag-means-in-queue},
		$\lNode$ was already in the linked list of nodes
		starting from $\gINode$ when $\lNode \rightarrow \fFlag$ was set to $1$.
		By Property~\ref{gINode-add-to-tail} of Lemma~\ref{simple-prop},
		$\gTail$ was pointing to the node before $\lNode$
		when $\lNode$ was linked into the list.
		By Lemma~\ref{simple-prop},
		$\gTail$ can only be moved forward node by node through the list
		starting from $\gINode$ that can never be disconnected.
		Thus after the $\trynq$ procedure attempted to move $\gTail$ forward 
		in the linked list of nodes starting from $\gINode$ 
		if $\gTail$ was not already at the end of the list
		(lines~\ref{C2repeat} to \ref{C5.1}),
		observe that $\gTail$ must have previously 
		been set to point to $\lNode$.
		
		In the latter case,
		the $\trynq$ procedure previously 
		set the $\fNext$ field of some node 
		to point to $\lNode$ on line~\ref{C8},
		then attempted to change $\gTail$ 
		from $\lTail$ to $\lNode$ on line~\ref{C10}.
		By Property~\ref{gINode-add-to-tail} of Lemma~\ref{simple-prop},
		line~\ref{C8} added $\lNode$ to the end of 
		the linked list starting from $\gINode$,
		with $\gTail = \lTail$ pointing to the node
		that was previously at the end.
		
		Then observe that after line~\ref{C10},
		regardless of whether the CAS succeeds,
		$\gTail \ne \lTail$.
		So $\gTail$ has been changed after line~\ref{C8}.
		Thus by Property~\ref{gINode-tail-move} of Lemma~\ref{simple-prop},
		$\gTail$ must have been set to point to $\lNode$,
		the next node in the linked list, at some time 
		between lines~\ref{C8} and \ref{C11}.
		
		So in both cases, a $\trynq$ procedure
		only returns $\vDone$ after 
		$\gTail$ has previously 
		been set to point to $\lNode$.
		Thus by Corollary~\ref{tail-node-most-once},
		when an $\NQ$ operation completes,
		$\gTail$ has previously 
		been set to point to 
		the node it allocated on line~\ref{A8}
		exactly once.
	\end{proof}

	\begin{lemma} \label{DQ-lin-once}
		For each $\DQ$ operation,
		$\gHead.\add$ has been set to point to
		the new location it allocates on line~\ref{B10}
		exactly once. 
	\end{lemma}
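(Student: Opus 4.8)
The plan is to prove the two halves of the ``exactly once'' claim separately, exactly as in the enqueue counterpart (Lemma~\ref{NQ-lin-once}): the ``at most once'' direction is already supplied by Lemma~\ref{ptr-in-gHead-once}, so all the real work lies in showing that $\gHead.\add$ is set to point to the newly allocated location \emph{at least once} before the operation returns. To begin, I would observe that a $\DQ$ operation returns (line~\ref{B19}) only after its final call to $\trydq(\ladd)$ on line~\ref{B12} returns a value $t \ne \vFail$, and that $\trydq$ produces a non-$\vFail$ value on exactly one of three lines: \ref{D16}, \ref{D25}, or \ref{D10}. I would then dispatch these three exit points.

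The two ``self-service'' exits, lines~\ref{D16} and~\ref{D25}, are immediate: each is reached only after a \emph{successful} CAS on $\gHead$ (line~\ref{D15} or~\ref{D24}, respectively) whose new value has $\add$ field equal to $\ladd$. Hence in these cases $\gHead.\add$ is set to point to $\ladd$ by that very CAS, before the operation completes, and we are done.

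The hard case --- and the main obstacle --- is the ``helped'' exit on line~\ref{D10}, where the operation returns the contents of its own location $\ladd$ because a read on line~\ref{D9} found them to be non-$\mathnull$. Here I would trace the provenance of that non-$\mathnull$ value. The only two places that write through a pointer to a dequeue location are line~\ref{B10.1}, which writes $\mathnull$, and line~\ref{D6}, which writes $\lHead.\val$ into the location pointed to by $\lHead.\add$; so the observed non-$\mathnull$ value must have been deposited by some process executing line~\ref{D6} with $\lHead.\add$ pointing to $\ladd$. (This step relies on the auxiliary invariant that $\gHead.\val$, and hence every value written on line~\ref{D6}, is always non-$\mathnull$ --- it is always either a genuine enqueued integer or $\emptyq$, both distinct from $\mathnull$ --- which I would record as a preliminary observation.) But a process can only have $\lHead.\add = \ladd$ if it read $\gHead$ with $\gHead.\add = \ladd$ on line~\ref{D2}; and since $\ladd$ is freshly allocated while $\gHead.\add$ initially points to $\gIDQ$ and thereafter changes only via the successful CASes of lines~\ref{D15} and~\ref{D24}, some such CAS must have set $\gHead.\add$ to point to $\ladd$ strictly before that read, and therefore before the returning operation reads the value on line~\ref{D10}. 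In every case, then, $\gHead.\add$ has been set to point to $\ladd$ at least once by the time the operation completes; combined with Lemma~\ref{ptr-in-gHead-once}, this yields exactly once.
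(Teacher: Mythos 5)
Your proof is correct and follows essentially the same route as the paper's: the same case analysis over the three non-$\vFail$ exits of $\trydq$ (lines~\ref{D16}, \ref{D25}, \ref{D10}), the same tracing of the helped exit back through line~\ref{D6} and the read of $\gHead$ on line~\ref{D2}, and the same appeal to Lemma~\ref{ptr-in-gHead-once} for the ``at most once'' half. Your additional observations (that values written on line~\ref{D6} are non-$\mathnull$, and that $\gHead.\add$ can only come to point to a fresh location via the CASes on lines~\ref{D15} and~\ref{D24}) are just slightly more explicit versions of steps the paper leaves implicit.
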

	
	\begin{proof}
		An $\DQ$ operation returns only after, 
		it has performed a $\trydq$ procedure on line~\ref{B12}
		with the location that it allocated on line~\ref{B10} 
		and the $\trydq$ procedure did not return $\vFail$
		(line~\ref{B18}).
		A $\trydq$ procedure returns a non-$\vFail$ value 
		only on lines~\ref{D10},\ref{D16}, and \ref{D25}.
		
		If a $\trydq$ procedure returns on line~\ref{D10},
		then it has found that $\ladd \ne \mathnull$
		on line~\ref{D9}.
		A non-$\mathnull$ value can only be written to $\ladd$
		on line~\ref{D6}, and only if $\ladd$
		has been previously pointed to by $\gHead.\add$
		(line~\ref{D2}).
		Thus $\gHead.\add$ has previously pointed to $\ladd$
		when a $\trydq$ procedure returns on line~\ref{D10}.
		
		If a $\trydq$ procedure returns on line~\ref{D16} or line~\ref{D25},
		then it previously set $\gHead.\add$ to $\ladd$
		on line~\ref{D15} or \ref{D24}.
		
		So in all cases, a $\trydq$ procedure
		only returns a non-$\vFail$ value after 
		$\gHead.\add$ has previously 
		been set to point to $\ladd$.
		Thus by Lemma~\ref{ptr-in-gHead-once},
		when an $\DQ$ operation completes,
		$\gHead.\add$ has previously been set to point to 
		the location it allocated on line~\ref{B10} exactly once.  
	\end{proof}
	
	By Lemmas~\ref{tail-node-most-once}, \ref{ptr-in-gHead-once}, 
	\ref{NQ-lin-once} and \ref{DQ-lin-once},
	we have the following:
	\begin{itemize}
		\item For each $\NQ$ operation,
		a pointer to the new node it allocates on line~\ref{A8}
		can be written into $\gTail$ at most once,
		and has occurred exactly once when the operation completes.
		
		\item For each $\DQ$ operation, 
		a pointer to the new location it allocates on line~\ref{B10}
		can be written into $\gHead$ at most once, 
		and has occurred exactly once when the operation completes. 
	\end{itemize}
	Thus by Observation~\ref{well-defined-L}, 
	the linearization points of the linearization $L$ of $H'$ 
	are well defined and within their execution intervals.
	
	\begin{theorem}\label{seq-spec} 
		The linearization respects the sequential specification of a queue.
	\end{theorem}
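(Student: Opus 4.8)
The plan is to translate the linearization order into the physical progress of the two pointers $\gTail$ and $\gHead.\ptr$ along the linked list, and then read off the queue semantics from that translation. I would index the nodes of the list starting from $\gINode$ by their position, with $\gINode$ at position $0$. By Lemma~\ref{simple-prop} (Properties~\ref{gINode-connected-list}, \ref{gINode-tail-move}, and~\ref{gINode-head-move}) the list is never disconnected and both $\gTail$ and $\gHead.\ptr$ only move forward, one node at a time; by Corollary~\ref{tail-node-most-once} $\gTail$ reaches each node at most once, and by Lemmas~\ref{NQ-lin-once} and~\ref{DQ-lin-once} the node (resp.\ location) created by each $\NQ$ (resp.\ $\DQ$) is reached by $\gTail$ (resp.\ written into $\gHead.\add$) exactly once. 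Observation~\ref{well-defined-L} already guarantees the linearization points are well defined and inside their intervals; what remains is to check that $L$ obeys the FIFO discipline and reports emptiness correctly.

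First I would establish an enqueue/tail correspondence: since each $\NQ$ is linearized exactly when $\gTail$ first reaches its node and $\gTail$ advances node by node, the $i$-th $\NQ$ in $L$ is precisely the operation whose node sits at position $i$. Hence the values enqueued, read in linearization order, are exactly the values stored along the list. Symmetrically, for a $\DQ$ returning $v \ne \emptyq$, the first time $\gHead.\add$ points to its location cannot be through the CAS on line~\ref{D15} (which writes $\gHead.\val = \emptyq$ and would force the returned value to be $\emptyq$), so it is through a successful CAS on line~\ref{D24}; that CAS simultaneously advances $\gHead.\ptr$ by one node and records as $v$ the $\val$ field of the newly reached node (lines~\ref{D4}, \ref{D23}, \ref{D24}). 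Thus the $j$-th non-empty $\DQ$ in $L$ advances $\gHead.\ptr$ to position $j$ and returns the value stored at position $j$. Combining the two correspondences yields FIFO directly: whenever at least $j$ enqueues precede the $j$-th non-empty dequeue in $L$, that dequeue returns the value enqueued by the $j$-th enqueue, as a sequential queue requires.

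It remains to verify the empty case. For a $\DQ$ returning $\emptyq$, let $q$ be the process whose CAS on line~\ref{D15} first sets $\gHead.\add$ to the operation's location, and let its linearization point be $q$'s last read of $\gTail$ on line~\ref{D3}. Because that CAS succeeds, $\gHead$ — and in particular $\gHead.\ptr$ — is unchanged between $q$'s read of $\gHead$ on line~\ref{D2} and its CAS on line~\ref{D15}; the line~\ref{D3} read lies in this window, and the empty test on line~\ref{D13} gives $\lHead.\ptr = \lTail$. Therefore at the linearization point $\gHead.\ptr = \lHead.\ptr = \lTail = \gTail$, i.e.\ both pointers sit at a common position $m$. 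By the two correspondences, exactly $m$ enqueues and exactly $m$ non-empty dequeues are linearized strictly before this instant, so the abstract queue is empty there and returning $\emptyq$ is correct.

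The main obstacle I anticipate is the treatment of this empty-dequeue linearization point, since it is a read rather than a pointer move: I must argue that the read faithfully witnesses $\gHead.\ptr = \gTail$ at that instant and that no enqueue sneaks in to make the abstract queue non-empty at the linearization point. The successful CAS on line~\ref{D15} is exactly what pins $\gHead.\ptr$ across the read, while $\gTail$ coincides with it at the read; any subsequent advance of $\gTail$ is linearized strictly later and hence does not count. A secondary point to dispatch is that all the relevant linearization events are distinct atomic steps, so no tie-breaking is needed and empty dequeues — which leave the abstract state unchanged — may be ordered among themselves arbitrarily.
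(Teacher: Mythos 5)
Your proof is correct and follows essentially the same route as the paper's: both identify the abstract queue state with the list segment between $\gHead.\ptr$ and $\gTail$ (Definition~\ref{Q-state}) and argue that each linearization point coincides exactly with the atomic step that changes this segment (enqueue: $\gTail$ advancing to the new node; non-empty dequeue: the CAS on line~\ref{D24} advancing $\gHead.\ptr$), or, for an empty dequeue, with an instant at which $\gHead.\ptr = \gTail$. Your write-up is in fact more detailed than the paper's terse argument --- notably the position-indexing/counting correspondence, the observation that a non-$\emptyq$ dequeue cannot be installed via line~\ref{D15}, and the explicit window argument that the successful CAS on line~\ref{D15} pins $\gHead$ across the line~\ref{D3} read --- but the underlying idea is the same.
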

	
	\begin{proof}
		By Definition~\ref{Q-state}, we regard the nodes 
		between $\gHead.\ptr$ and $\gTail$ as the actual nodes of the queue, 
		excluding the node pointed to by $\gHead.\ptr$. 
		By Lemma~\ref{simple-prop}, the queue is always connected 
		and there is no transient
		state. 
		
		We linearize an $\NQ$ operation at the point 
		when $\gTail$ is moved forward through the linked list 
		to point to the node that the $\NQ$ operation allocated on line~\ref{A8}.
		Thus by Definition~\ref{Q-state},
		the allocated node is enqueued exactly at the linearization point
		of the $\NQ$ operation.
		
		We linearize a $\DQ$ operation that returns a non-$\emptyq$ value
		at the point when the location that it allocated on line~\ref{B10} 
		is pointed to by $\gHead.\add$ (line~\ref{D24}).
		Note that at this linearization point,
		$\gHead.\ptr$ would simultaneously be moved forward 
		through the linked list, and that the node which is 
		dequeued by Definition~\ref{Q-state}
		has value equal to the return value of the $\DQ$ operation. 
		
		We linearize a $\DQ$ operation that returns $\emptyq$
		at the point when the process that would on line~\ref{D15} set
		$\gHead.\add$ to point to the location it allocated on line~\ref{B10}
		last reads $\gTail$ on line~\ref{D3}.
		Note that at this linearization point,
		since $\lHead.\ptr = \lTail$ (line~\ref{D13}),
		the queue is empty by Definition~\ref{Q-state}.  
		
		Thus our linearization points are simultaneous 
		with the queue structure changes, 
		so they follow the semantics of a queue structure.
	\end{proof}
	\section{Universal 2-nonblocking algorithm}\label{appendix-universal}
	Below we give a \emph{universal construction} for 2-nonblocking objects of any type $T$.
	The input to this algorithm is the sequential specification of the type $T$
	given in the form of a function $\applyT$.
	This function maps tuples of the form $(\op,s)$
	where $\op$ is an operation to be applied to the object
	and $s$ is the object current state,
	to a tuple $(s',r)$, where $s'$ is the new state
	of the object and $r$ is the return value.
	As in our $\DNB{2}$ queue algorithm,
	a process first tries to help another process only \emph{once}
	(the ``altruistic'' phase in lines~\ref{2C2}-\ref{2C4}) and then it repeatedly tries
	to perform its \emph{own} operation
	until it is done (the ``selfish'' phase in lines~\ref{2C6}-\ref{2C13}).
	All the processes that need help compete by writing their call for help
	on the \emph{same} shared register $\gAnn$;
	as we explained earlier, this promotes fairness.
	The proof of correctness is a simplification of the one that we give
	for the $\DNB{2}$ queue algorithm.

	\makeatletter
	\renewcommand{\ALG@name}{Fig.}
	\makeatother 
	
	
	\setcounter{algorithm}{8}
	
	\begin{algorithm}[H]
		\caption{Shared objects for the universal 2-nonblocking algorithm}
		\label{alg8}
		\begin{algorithmic}[1]
			\Statex $ \gAnn $: A register with two fields:
			\Statex \hspace{4mm} $\op$: the operation to execute.
			\Statex \hspace{4mm} $\add$: a pointer to a location that stores a returned value or $\mathnull$, initially pointing to any non-$ \mathnull $ value.
			\Statex $ \gR $: A compare\&swap with three fields:
			\Statex \hspace{4mm} $\stt$: a register containing an integer, initially with the initial state.
			\Statex \hspace{4mm} $\res$: the response value to be returned by the function, initially  $ \mathnull. $
			\Statex \hspace{4mm} $\add$: a pointer to a location where $ R.\res $ will be stored, initially pointing to any non-$ \mathnull $ value.
			\Statex
		\end{algorithmic}
		\vspace*{-4mm}
	\end{algorithm}
	
	\setcounter{algorithm}{1}
	\setcounter{figure}{1}
	
	\makeatletter
	\renewcommand{\ALG@name}{Algorithm}
	\makeatother
	
	%
	%
	%

	\begin{algorithm}[h]
		\caption{Universal 2-nonblocking algorithm (code for process $p$)}
		\begin{algorithmic}[1]
			\Procedure{\tnb}{$\op$}\label{2C1}
			\State $ (\op',\ladd):=\gAnn.\textsc{read}() $\label{2C2}
			
			\State $\try(\op',\ladd) $\label{2C4}
			
			\State $ \ladd:= $ a pointer to a new location\label{2C6}
			\State $ ^*(\ladd).\textsc{write}(\mathnull) $\label{2C7}
			\Repeat\label{2C8}
			\State $ t:=\try(\op,\ladd) $\label{2C9}
			
			\If{$ t=\FALSE $}\label{2C10}
			\State $ \gAnn.\textsc{write}(\op,\ladd) $\label{2C11}
			\EndIf\label{2C12}
			\Until{$ t \ne \FALSE $}\label{2C13}
			\State \textbf{return} $ t $\label{2C14}
			\EndProcedure
			\Statex
			
			\Procedure{\try}{$\op, \ladd $}\label{2B1}
			\State $ \lR:=\gR.\textsc{read}() $\label{2B3}
			\State $ ^*(\lR.\add).\textsc{write}(\lR.\res) $\label{2B4}
			\If{$ ^*(\ladd).\textsc{read}() \ne \mathnull $}\label{2B5}
			\State \textbf{return} $ \ladd.\textsc{read}() $\label{2B6}
			\EndIf \label{2B7}
			\State $( s',\res'):= \applyT(\op,\lR.\stt) $\label{2B8}
			\If{$\CAS(\gR,\lR,(s',\res',\ladd))$}\label{2B9}
			\State \textbf{return} $ \res' $\label{2B10}
			\Else\label{2B11}
			\State \textbf{return} $ \FALSE $\label{2B12}
			\EndIf\label{2B13}
			\EndProcedure
		\end{algorithmic}
	\end{algorithm}
	
\end{document}